\newtheorem{Definition}{Definition}
\newtheorem{Proposition}{Proposition}
\newtheorem{lemma}{Lemma}
\renewcommand{\Re}{\operatorname{Re}}
\begin{document}
\title{Age of Semantic Information-Aware Wireless Transmission for Remote Monitoring Systems}
\author{	
\IEEEauthorblockN{Xue Han, Biqian Feng, Yongpeng Wu, \emph{Senior Member, IEEE}, Xiang-Gen Xia, \emph{Fellow, IEEE}, \\
Wenjun Zhang, \emph{Fellow, IEEE}, and Shengli Sun}
\thanks{X. Han, B. Feng, Y. Wu, and W. Zhang are with the Department of Electronic Engineering, Shanghai Jiao Tong University, Minhang 200240, China (e-mail: han.xue@sjtu.edu.cn; fengbiqian@sjtu.edu.cn; yongpeng.wu@sjtu.edu.cn;  zhangwenjun@sjtu.edu.cn) (Corresponding author: Yongpeng Wu).}

\thanks{Xiang-Gen Xia is with the Department of Electrical and Computer Engineering, University of Delaware, Newark, DE 19716 USA (e-mail: xxia@ee.udel.edu).}

\thanks{Shengli Sun is with Shanghai Institute of Technical
Physics, Chinese Academy of Sciences, Shanghai 200083, China (e-mail: Palm\_sun@mail.sitp.ac.cn).}
}
\maketitle

\begin{abstract}
Semantic communication is emerging as an effective means of facilitating intelligent and context-aware communication for next-generation communication systems. In this paper, we propose a novel metric called Age of Incorrect Semantics (AoIS) for the transmission of video frames over multiple-input multiple-output (MIMO) channels in a monitoring system. Different from the conventional age-based approaches, we jointly consider the information freshness and the semantic importance, and then formulate a time-averaged AoIS minimization problem by jointly optimizing the semantic actuation indicator, transceiver beamformer, and the semantic symbol design. We first transform the original problem into a low-complexity  problem via the Lyapunov optimization. Then, we decompose the transformed problem into multiple subproblems and adopt the alternative optimization (AO) method to solve each subproblem. Specifically, we propose two efficient algorithms, i.e., the successive convex approximation (SCA) algorithm and the low-complexity zero-forcing (ZF) algorithm for optimizing transceiver beamformer. We adopt exhaustive search methods to solve the semantic actuation policy indicator optimization problem and the transmitted semantic symbol design problem.
Experimental results demonstrate that our scheme can preserve more than 50\% of the original information under the same AoIS compared to the constrained baselines.

\end{abstract}

\begin{IEEEkeywords}
Semantic communication, age of information, resource allocation, wireless video transmission.
\end{IEEEkeywords}

\section{Introduction}
With the development towards the sixth generation (6G) of wireless communication networks, a widely accepted challenge is data transmission within these exponentially growing applications, e.g., digital twins, autonomous vehicles, and intelligent environments, which poses tremendous challenges to the utilization of limited bandwidth resources \cite{Zhu}.
Semantic communication, as an emerging communication paradigm, is envisioned to be a potential solution to these systems requiring extremely high data rates and ultra-low latency, which facilitates the seamless integration of information and communication technology with artificial  intelligence (AI) \cite{Yang_app, Han_SCSC}. Different from traditional communication methods, semantic communication emphasizes the transmission of valuable semantic-aware information rather than accurate bit recovery, thereby guaranteeing improved transmission efficiency and reliability \cite{Gunduz_bit}.

The fundamental idea underlying semantic communication is to connect the source and channel components of Shannon's theory \cite{Shannon_theory}, thereby facilitating end-to-end wireless transmission. Recently, the rapid development of deep learning (DL) has inspired research interest in deep joint source and channel coding (JSCC) to enable the efficient transmission of semantic communication for various data types over wireless channels \cite{Xie_JSCC}-\cite{Wang_video}. For example, 
Bourtsoulatze et al. \cite{Bourtsoulatze_img} proposed a novel neural network that combines source and channel coding for image compression and transmission.
Wang et al. \cite{Wang_video} proposed a deep video semantic transmission structure based on the nonlinear transform and conditional coding to adaptively extract semantic features according to the source entropy.

To improve the performance of semantic communication, especially in the realm of deep JSCC, researchers have introduced various new metrics and delved deeply into semantic resource allocation. Yan et al. \cite{Yan_text} proposed semantic spectral efficiency in an uplink scenario and optimized resource allocation for text-based semantic communication. Furthermore, the work in \cite{Yan_QoE} defined semantic entropy for different tasks and investigated a quality-of-experience (QoE) maximization problem in terms of the number of transmitted semantic symbols, channel assignment, and power allocation. 
For the exploration of efficient resource allocation strategies \cite{Mu_NOMA}-\cite{Yang_22}, for example, Mu et al. \cite{Mu_NOMA} investigated the resource allocation challenges within the downlink non-orthogonal multiple access (NOMA) system. Chi et al. \cite{Chi_ICC} focused on optimizing compression ratio, power allocation, and resource block assignment to enhance user numbers and image quality. 
Yang et al. \cite{yang_2023} investigated wireless resource allocation and semantic extraction for energy-efficient communications using rate splitting, aiming to minimize total communication and computation energy consumption. 
Xia et al. \cite{xia_2023} addressed user association and bandwidth allocation challenges through the development of new metrics and optimization solutions, leading to improved network performance in both perfect and imperfect knowledge-matching scenarios. 
Yang et al. proposed a Bayesian hybrid beamforming \cite{Yang_11} and developed a Pareto-optimal resource allocation scheme for energy-efficient transmission  \cite{Yang_22}.
Specifically, a stale, incorrect, or energy-inefficient delivery would aggravate the loss of utility of information. Despite the positive outcomes demonstrated by these approaches, the aspect of information freshness within the JSCC framework remains under-explored. 

The freshness of information, to a certain extent, reflects the significance and importance of information. A well-known metric to  measure the importance and priority of information carried by data is the age of information (AoI), which is defined as the time elapsed since the generation of the latest received update \cite{Kaul_AoI}\cite{Liaimin_Goal}. 
Recent studies have proposed various different evaluation metrics, such as Age of Incorrect Information (AoII) \cite{AoII}, Urgency of Information \cite{UoI}, Value of Information (VoI) \cite{VoI}, Query AoI \cite{QoI}, and unified closed-form average AoI \cite{Liaimin_ARQ}, which compensate for the limitation of AoI in failing to capture the content of information. The AoII is proposed to extend the notion of fresh updates to that of fresh informative updates.
Specifically, for semantic-aware scenarios, Maatouk et al. \cite{Maatouk_2022_AoI} introduced the AoII to enhance semantics-empowered communication and proposed an optimal randomized threshold transmission strategy. 
Agheli et al. \cite{Agheli_2022_AoI} proposed semantics of information (SoI) to measure the significance and usefulness of semantic information concerning the goal of data exchange. 
This work is extended into distributed wireless monitoring systems with multiple sensors and multiple monitors \cite{Agheli_2024_AoI}.
Luo et al. \cite{Luo_2024_AoI} investigated semantic-aware communication for remote estimation of multiple Markov sources and developed efficient algorithms to overcome dimensionality issues by leveraging information semantics.
Chen et al. \cite{Chen_2024_AoI} introduced the Age of Semantic Importance (AoSI) metric and proposed an algorithm based on Deep Q-Network to optimize multi-source scheduling and resource allocation.
Moreover, Hu et al. \cite{Hu_2024_AoI} introduced the concept of Version Age of Information (VAoI) to federated learning and integrated timeliness and content staleness into client scheduling policies to minimize average VAoI.
However, it is worth mentioning that although these pioneering works demonstrate promising prospects and performance results for integrating AoI with semantic communication, their adoption in practical monitoring systems is challenging. 

For application-centric approaches, especially within the deep JSCC framework, there are few studies on semantic-aware scheduling and resource allocation considering both the information freshness and semantic importance for video transmission. 
When it comes to monitoring systems, integrating AoI with semantic communication faces several challenges. On one hand, monitoring systems usually have strict requirements for data freshness to ensure timely and accurate monitoring results. On the other hand, the semantic information processing in semantic communication needs to consider the semantic importance of video content, which may conflict with the traditional AoI-based resource allocation strategies that focus on information freshness. Furthermore, the resource-constrained wireless environment adds another layer of difficulty. Under such circumstances, it is challenging to design an efficient resource allocation strategy and semantic information processing method that can meet the diverse demands of practical monitoring systems, such as the wildland fire alarm system, which requires a high level of data freshness but is less concerned with data perfect pixel-level reconstruction. 

This motivates us to explore an efficient end-to-end wireless transmission and resource optimization scheme to achieve effective data transmission while maintaining the information freshness. Moreover, the limited computing resources and scarce spectrum resources of the wireless communication system lead to intense competition for resources, especially during peak times. Semantic communications focus on the purpose of transmitting meanings, instead of precisely recovering all original bits, and therefore improves spectral efficiency.

In this work, we go one step further and investigate the age of semantic information in wireless video transmission and resource allocation, focusing on both content and freshness for remote monitoring systems. We aim to construct an efficient joint optimization strategy for the semantic actuation indicator, transceiver beamformer, and semantic extraction to minimize the time-averaged age of incorrect semantics (AoIS) of all users in a video monitoring wireless network. 
Specifically, the base station first determines whether to perform the semantic information extraction and transmission. 
Subsequently, we formulate a time-averaged AoIS optimization problem by considering the semantic actuation indicator, transmit and receive beamformer, and the semantic symbol design simultaneously. This joint optimization problem is spatially and temporally coupled in different time slots, leading to a high-dimensional mixed-integer dynamic program that is difficult to solve in practice. To solve this stochastic optimization problem, we first adopt the Lyapunov optimization to transform it into a per-slot real-time optimization problem with multiple subproblems, and then employ alternating optimization (AO) to solve the subproblems.
The main contributions of this paper can be summarized as follows. 
\begin{itemize}
\item {\textit{AoIS-aware Semantic Communication:} 
We consider a wireless video semantic transmission framework in which only important frame information is transmitted, greatly reducing the communication overhead. Then, we propose the concept of AoIS and introduce an indicator for semantic actuation policy design. 
}

\item {\textit{Lyapunov Optimization for the AoIS Minimization:}  We adopt AoIS to quantify information freshness and semantic information value. We formulate a long-term AoIS minimization problem by jointly optimizing the actuation indicator, transceiver beamformer, and semantic extraction. Then, the time-averaged AoIS minimization problem is transformed into a queue stability problem according to the Lyapunov optimization method.}

\item {\textit{Problem Solution:} We employ the AO method to optimize all variables alternately. Specifically, we propose a successive convex approximation (SCA)-based algorithm to solve the transceiver beamformer design problem in the multiple-input multiple-output (MIMO) scenario, and propose a low-complexity zero-forcing (ZF) algorithm to optimize the transceiver beamformer in the multi-user multiple-input single-output (MISO) system. The exhaustive search method is adopted to optimize both the semantic actuation indicator and the semantic symbol design problems.}

\item {\textit{Performance Validation:} Simulation results demonstrate the convergence of our proposed algorithms. The AoIS-aware algorithms demonstrate that semantic extraction significantly reduces AoI while maintaining the high quality of the reconstructed frames, highlighting the potential of these methods for improving real-time video transmission in resource-constrained environments. }
\end{itemize}

The remainder of the paper is structured as follows. Section \ref{sec:system model} introduces the considered system model, the definition of the novel metric, AoIS, and formulates an optimization problem. Section \ref{sec:perfect} describes the Lyapunov transformation and several algorithms for solving the problem. Section \ref{sec:LCZF} introduces a low-complexity ZF transceiver beamforming optimization algorithm. Section \ref{sec:Results} presents the numerical results and performance evaluation. Finally, Section \ref{sec:Conclusion} concludes this paper.

\textit{Notations}:  We adopt $x$, $\mathbf x$, and $\mathbf X$ to denote a scalar, vector, and matrix, respectively. Superscripts $T$ and $H$ stand for the transpose and conjugate transpose, respectively. $ \mathbb{R} $ and $\mathbb{C} $ refer to the real and complex number sets, respectively. $\| \cdot \|$ denotes the Frobenius norm and $\mathbb E$ is the mathematical expectation. $\mathcal{CN}(\mu,\sigma^2)$ represents the circularly symmetric complex Gaussian distribution with mean $\mu$ and variance $\sigma^2$. $\odot$ is the Hadamard product.

\section{System Model}
\label{sec:system model}
In this section, we describe the wireless video downlink transmission in remote monitoring scenarios and the considered semantic communication framework. Then, we introduce the definition of the proposed AoIS metric and the design of the semantic actuation efficiency policy. Moreover, an optimization problem is formulated to minimize the total actuation cost for the semantic video frame transmission.   

\subsection{Semantic Communication-Empowered Remote Monitoring System}
We consider a remote monitoring system comprising $U$ device users, each equipped with $N_r$ antennas. Given the raw video $\mathbf S$ sent by the base station (BS) with $N_t$ antennas, it is processed to a semantic sequence before being transmitted to the device users and can be reconstructed as $\hat{\mathbf S}$ by their video decoder.

\subsubsection{Semantic-aware Transmitter}

\begin{figure*}[htbp]
\centering
\includegraphics[width=0.9\linewidth]{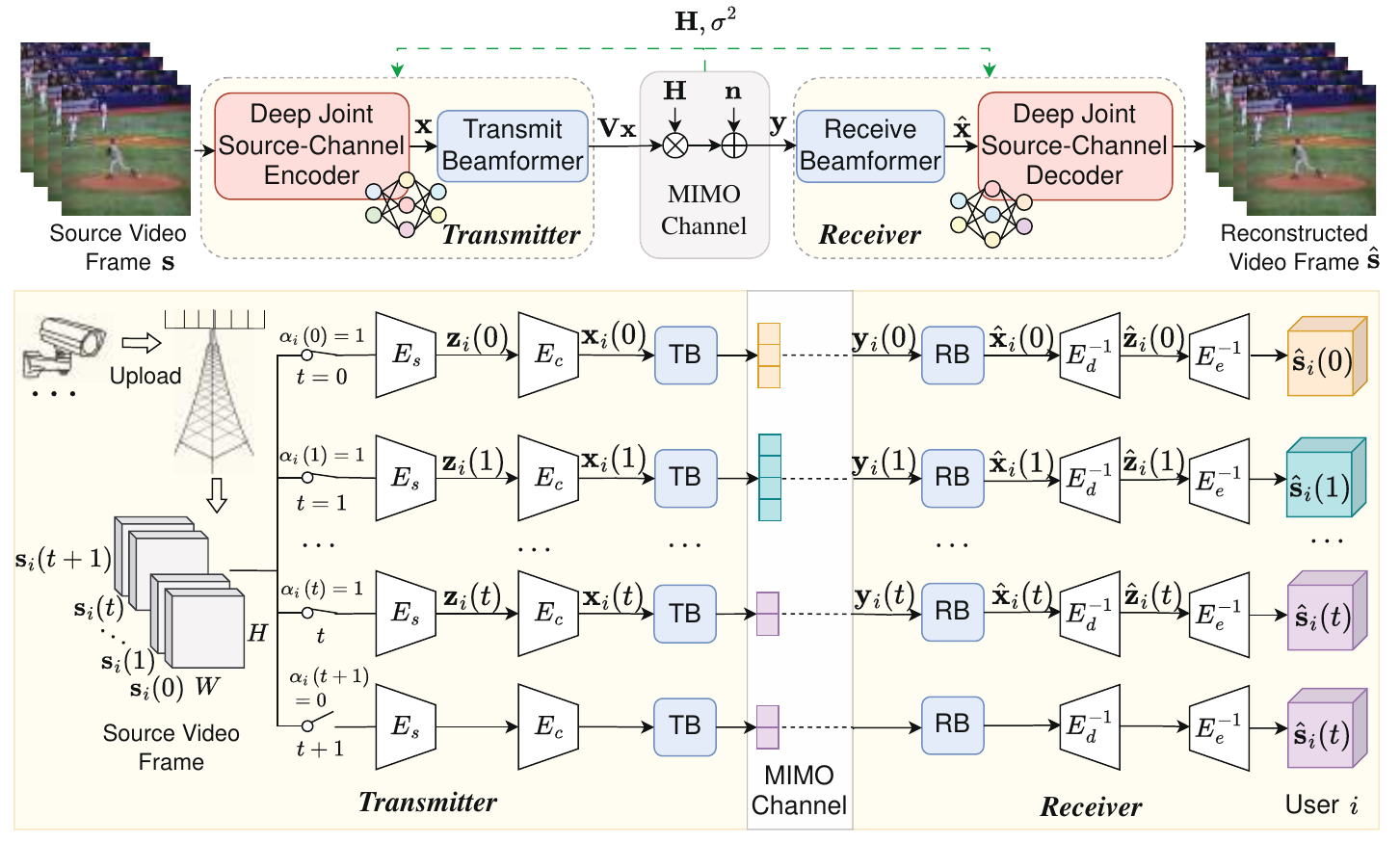}
\caption{System model of the AoIS-aware multi-user semantic communication network in the remote monitoring system.}
\label{system_2}
\end{figure*}

As shown in Fig. \ref{system_2}, the BS aims to transmit semantic information extracted from the captured source video frame $\{\mathbf s_i(0), \mathbf s_i(1), \cdots\}$ for device user $i$ in real time, where $\mathbf {s}_{i} (t) \in \mathbb{R}^{H \times W \times 3}$ denotes the $t$-th source video frame. Specifically, a semantic encoder, $E_{\varrho}(\cdot): \mathbb{R}^{H \times W \times 3} \mapsto \mathbb{R}^{N_H  \times 1}$ parameterized
by the offline-trained weight $\varrho$, is adopted to encode the raw video frame $\mathbf s_i (t)$ into the semantic features $\mathbf z_i (t)\in \mathbb R^{N_H \times 1} $, where $N_H$ refers to the sequence length of the extracted semantic information.  Subsequently, the variable-length channel encoder, $E_{\nu}(\cdot): \mathbb{R}^{{N_H} \times 1} \mapsto \mathbb{C}^{L_i(t) \times 1}$, maps the aforementioned feature ${\mathbf z}_i (t)$ into the semantic symbols to combat fading and noise, i.e., $\mathbf x_i(t)= E_{\nu}({\mathbf z}_i)$, where $\nu$ denotes the trainable parameters and $L_i(t)$ is the size of extracted semantic symbol vector from the $t$-th source video frame intended for user $i$. The channel bandwidth ratio (CBR) is $L_i (t)/3HW$ which measures the utilization of channel resources per symbol.
Therefore, the semantic information encoding process of the video frame can be summarized as
\begin{equation}
\mathbf{s}_i (t)\xrightarrow{E_{\varrho}(\cdot)}
\mathbf z_i (t)  \xrightarrow{E_{\nu}(\cdot)}
{\mathbf{x}}_i (t),
\end{equation}
where the semantic symbol vector ${\mathbf{x}}_i (t)$ is given by
\begin{equation}
\mathbf{x}_i(t) = {E_{\nu}(E_{\varrho}(\mathbf{s}_i (t)))}.
\end{equation}

\subsubsection{Wireless Channel Transmission}

We consider a Rayleigh block-fading channel model where each element of the channel matrix follows an independent and identically distributed (i.i.d.) complex Gaussian distribution $\mathcal {CN}(0,1)$.
The received signals $\mathbf {y}_i (t) \in \mathbb{C}^{N_r \times 1}$ at the $i$-th user at time slot $t$ can be written as 
\begin{equation}
\label{received signal-original}
\begin{aligned}
&\mathbf {y}_i(t) = \mathbf{H}_{i} (t) \sum_{j=1}^U \mathbf{V}_{j} (t)\mathbf{x}_{j}(t) + \mathbf{n}_i(t),
\end{aligned}
\end{equation}
where $\mathbf{H}_{i} (t) \in \mathbb{C}^{N_r \times N_t}$ represents the channel state information (CSI) between the BS and user ${i}$, $\mathbf{V}_{i} (t)$ denotes the transmit beamformer, while $\mathbf{n}_i(t)$ denotes the additive white Gaussian noise with distribution $\mathcal{CN}(0,\sigma_i^2\mathbf I)$. We assume the channel $\mathbf{H}_i (t)$ is block-fading and the channel coefficients remain constant for a block of consecutive symbols and change to an independent realization in the next block. We assume that the channel state information and source statistics are known to the transmitter and receiver. The beamformer $\mathbf V_i (t)$ satisfies the power constraint 
\begin{equation}
\sum_{i=1}^U\text{Tr}\left(\mathbf{V}_{i} (t)\mathbf{V}_{i}^H (t)\right) \leq P_\mathrm{max},
\end{equation}
where $P_\mathrm{max}$ denotes the power budget. The CSI is assumed to be known perfectly by both the transmitter and receiver.

\subsubsection{Semantic-aware Receiver}
At device user $i$, we consider a linear receive beamforming strategy so that the estimated signal is given by
\begin{equation}
\hat{\mathbf{x}}_i (t) = \mathbf{U}_{i}^H (t)\mathbf {y}_i(t), \forall i,
\end{equation}
where $\mathbf{U}_{i} (t) \in \mathbb{C}^{N_r \times N_r}$ is receive beamformer for user $i$ to detect $\mathbf x_i(t)$. Then the estimated signal passes through the channel decoder, ${E_{d}^{-1}(\cdot)}:\mathbb{C}^{L_i(t) \times 1}  \mapsto \mathbb{R}^{{N_H} \times 1}$ parameterized by $d$ to get the recover semantic information $\hat{\mathbf z}_i (t) \in \mathbb{R}^{{N_H} \times 1}$, and the semantic decoder, ${E_{e}^{-1}(\cdot)}: \mathbb{R}^{{N_H} \times 1} \mapsto  \mathbb{R}^{{N} \times 1}$ parameterized by $e$ to recover the original source frame $\hat{\mathbf s}_i (t) \in \mathbb{R}^{{N} \times 1}$. Semantic communication only transmits the relevant semantic information instead of the whole message to enhance transmission efficiency.
The above process is summarized as follows
\begin{equation}
\mathbf y_i (t)\xrightarrow{\mathbf{U}_i (t)}
\hat{\mathbf x}_i (t)\xrightarrow{E^{-1}_{d}(\cdot)}
\hat{\mathbf z}_i (t)\xrightarrow{E^{-1}_{e}(\cdot)}
\hat{\mathbf s}_i (t).
\end{equation}
Overall, the recovered semantic information $\hat{\mathbf z}_i(t)$ and the  reconstructed video frame $\hat{\mathbf s}_i(t)$ at the receiver side can be, respectively, expressed as
\begin{equation}
\label{eq: recovered1}
\begin{aligned}
\hat{\mathbf z}_i (t) &= E^{-1}_{d} (\hat{\mathbf x}_i (t))= E^{-1}_{d}(\mathbf{U}_{i}^H (t)\mathbf {y}_i(t))),\\
\hat{\mathbf s}_i (t) &=E^{-1}_{e} (\hat{\mathbf z}_i (t)) = {E^{-1}_{e}(E^{-1}_{d}(\mathbf{U}_{i}^H (t)\mathbf {y}_i(t)))}.
\end{aligned}
\end{equation}
Note that for any transmitted semantic symbol vector ${\mathbf{x}}_i (t)$, the   recovered semantic information $\hat{\mathbf z}_i(t)$ and the reconstructed video frame $\hat{\mathbf s}_i(t)$ are both determined by the transceiver beamformers $\{\mathbf V_i(t), \mathbf U_i(t)\}$, and the received signal $\mathbf y_i(t)$ involves all the transmit beamformers as defined in Eq. \eqref{received signal-original}.

Thanks for the semantic communication framework, the transmission delay $T_i(t)$ of the $t$-th source video frame at user $i$ is reduced to \begin{equation}
T_{i}(t)=\frac{{L}_i(t)}{B R_{i}(t)},
\end{equation}
where $B$ (Hz) is the bandwidth in the system; the transmission rate $R_{i}(t)$ (bps/Hz) is defined as Eq. \eqref{eq: rate}.
\begin{figure*}[hb]
\centering 
\hrulefill 
\begin{equation}
\label{eq: rate}
\begin{aligned}
&R_{i}(t)\triangleq \log\det\left(\mathbf{I}+ \mathbf{U}_{i}^H(t)\mathbf{H}_{i}(t)\mathbf{V}_{i}(t)\mathbf{V}_{i}^H(t)\mathbf{H}_{i}^H(t) \mathbf{U}_{i}(t)\left(\sum\limits_{j\neq i}\mathbf{U}_{i}^H(t)\mathbf{H}_{j}(t)\mathbf{V}_{j}(t)\mathbf{V}_{j}^H(t)\mathbf{H}_{j}^H(t)\mathbf{U}_{i}(t)+\sigma_{i}^2\mathbf{I}\right)^{-1}\right).
\end{aligned}
\end{equation}
\end{figure*}

\subsection{Age of Incorrect Semantics}
In this subsection, we analyze the design of the video frame transmission mechanism according to the age of incorrect semantics (AoIS) and drop index $i$ for convenience.

\begin{Definition}[AoIS]
Let $\mathbf z(t)$ {{denote}} the semantic information extracted by the semantic encoder and $\hat{\mathbf z}(t)$ {{represent}} the recovered semantic feature at a user, respectively, in communication round $t$. Then the age of incorrect semantic information at the user side is defined as \cite{AoII} \cite{Maatouk_2022_AoI}
\begin{equation}
\Delta_\text{AoIS} (\mathbf z(t), \hat{\mathbf z}(t), t) \triangleq f(t) g(\mathbf z(t), \hat{\mathbf z}(t)),
\end{equation}
where $g(\mathbf z(t), \hat{\mathbf z}(t)) \in [0,1]$ reﬂects the semantic mismatch between $\mathbf z(t)$ and $\hat{\mathbf z}(t)$. The non-decreasing function $f(t): [0, +\infty) \mapsto [0, +\infty)$ plays the role of penalizing the system processing time.
\end{Definition}
From the above definition, one can see that under the same semantic similarity condition, a larger value of $f(t)$ leads to a larger AoIS value.

The BS invariably keeps its status at the latest (current) version in the monitoring system. In order to evaluate the performance of semantic communications for video frame transmission, we adopt the semantic similarity, denoted as $\mathrm{Sim}(\mathbf{z}, \hat{\mathbf{z}})$, between the original feature $\mathbf{z}$ at the BS and the recovered feature $\hat{\mathbf{z}}$ at the receiver as the performance metric, i.e.,
\begin{equation}
\begin{aligned}
	&\mathrm{Sim}(\mathbf{z}, \hat{\mathbf{z}})=\frac{f_\lambda(\mathbf{z})f_\lambda(\hat{\mathbf{z}})^\mathrm{T}}{\|f_\lambda(\mathbf{z})\|\|f_\lambda(\hat{\mathbf{z}})\|},
\end{aligned}
\label{def_similarity1}
\end{equation}
which means $\mathbf{z}$ and $\hat{\mathbf{z}}$ are input into the downstream task execution module $f_\lambda(\cdot)$ \cite{wang_spiking}, and then calculate the cosine similarity of the inference results. From $\eqref{def_similarity1}$, we have $0 \leq \mathrm{Sim}(\mathbf{z}, \hat{\mathbf{z}}) \leq 1$, and the higher semantic similarity, the more task-related information is preserved in the reconstructed features.
To align with the system's optimization goal, we denote the optimization objective function $g(\mathbf{z}, \hat{\mathbf{z}})$ as
\begin{equation}
g(\mathbf{z}, \hat{\mathbf{z}}) = 1-\mathrm{Sim}(\mathbf{z}, \hat{\mathbf{z}}),
\end{equation}
where $0 \leq g(\mathbf{z}, \hat{\mathbf{z}}) \leq 1$. 
The existing works usually propose various schemes to ensure error-free data transmission, such as hybrid automatic repeat request (HARQ) to make $\mathbf z=\hat{\mathbf z}$. Once the data is successfully received and verified, an acknowledgment (ACK) signal is sent back to confirm the correct delivery. 
In this paper, we leverage semantic communication techniques to achieve robust performance even in the presence of data loss, i.e., $\mathbf z \neq \hat{\mathbf z}$. To circumvent the computational infeasibility of computing $g(\mathbf{z}, \hat{\mathbf{z}})$, we employ an approximation approach by taking the expectation over the noise distribution as a surrogate for the exact performance metric, i.e.,
\begin{equation}
\label{def_similarity}
\begin{aligned}
	&g(\mathbf{z}, \hat{\mathbf{z}})\approx 1- \mathbb E_{\mathbf n}\left[\frac{f_\lambda(\mathbf{z})f_\lambda(\hat{\mathbf{z}})^\mathrm{T}}{\|f_\lambda(\mathbf{z})\|\|f_\lambda(\hat{\mathbf{z}})\|}\right]\\
	&\approx 1- \frac{1}{M}\sum_{m=1}^M\frac{f_\lambda(\mathbf{z})f_\lambda(\hat{\mathbf{z}}^{(m)})^\mathrm{T}}{\|f_\lambda(\mathbf{z})\|\|f_\lambda(\hat{\mathbf{z}}^{(m)})\|},
\end{aligned}
\end{equation}
where the last approximation arises as a consequence of utilizing a Monte Carlo method for the estimation of the mathematical expectation. $M$ denotes the number of samples, and $\hat{\mathbf{z}}^{(m)}$ is the $m$-th reconstructed $\mathbf{z}$ at the BS as follows. At the BS, we regenerate the $m$-th received copy corresponding to $\mathbf{z}$ and the transmitted $\mathbf{x}$ by utilizing the known CSI and then adding a random noise after the channel with the CSI.

In this work, $f(t)$ is introduced to synchronize data transmission at intervals, compensating for the imprecision of semantic similarity estimates and ensuring timely updates to prevent excessive estimation errors due to prolonged unrefreshed data.
Generally, considering an exponential time-dissatisfaction function, $f(t)$ can be formulated as
\begin{equation}
f(t) = \exp(b(t - \epsilon(t))),
\end{equation}
where $b > 0$ is a positive constant, $\epsilon(t)$ denotes the last instant time where the user device receives the video frame. Thus, to reduce AoIS, it is essential to achieve both shorter update intervals and maintain higher semantic similarity accuracy. 

\subsection{Semantic Actuation Efficiency Policy Design}
For the time-slotted wireless MIMO communication system, the BS monitors a discrete random process and sends video frame sequences to users according to the status updates. The receiver sends an  ACK and negative ACK (NACK) for successful and failed transmissions, respectively. We assume the ACK/NACK transmission is instantaneous and error-free\footnote{This is a common assumption in the literature, as ACK/NACK packets are usually smaller than data packets and may be sent over a separate channel.}. As illustrated in Fig. \ref{system_2}, the BS sequentially parses each video frame in the video sequence $\mathbf S$. We denote the decision to sample the $t$-th video frame and transmit to user $i$ as $\alpha_i(t) \in \mathcal A$, $\mathcal A \in \{0,1\}$ as follows \cite{Pappas_goal}
\begin{equation}
\alpha_i(t)=\begin{cases}1,\text{ source sampled and state transmitted,}\\0,\text{ otherwise.}&\end{cases}
\end{equation}

If the transmission of frame $\mathbf s_i(t)$ is sampled and transmitted, i.e., $\alpha_i(t) =1$, then the AoIS at the subsequent time $t$ is represented by 
\begin{equation}
\begin{aligned}
\Delta_\text{AoIS} (\mathbf z_i(t), \hat{\mathbf z}_i(t), t)
&=\exp(b(t - \epsilon_i(t))) g(\mathbf{z}_i(t), \hat{\mathbf{z}}_i(t)).
\end{aligned}
\end{equation} 
Note that it is infeasible for the BS to compute the precise recovery information $\hat{\mathbf z}_i(t)$ at the $i$-th user due to the unavailability of accurate additive noise. Given the CSI and the noise distribution, we can estimate the approximate probability distribution of the recovered $\hat{\mathbf z}_i(t)$. Consequently,  the BS relies on the approximate AoIS for decision-making. This approximate AoIS is derived by taking the sample average as an estimate of the expectation in Eq. \eqref{def_similarity}.

Otherwise, the discrepancy between the frame at user $i$ and the frame at BS is small, there is no need to transmit this frame for channel bandwidth and power saving. Then, the receiver does not need an update, it uses its previous estimate as the current one, i.e., $\epsilon_i(t) = \epsilon_i(t-1) $ and $\hat{\mathbf z}_i(t) = \hat{\mathbf z}_i(t-1)$, and the evolution of AoIS is reduced to
\begin{equation}
\begin{aligned}
\Delta_\text{AoIS} (\mathbf z_i(t), \hat{\mathbf z}_i(t), t)
=&\exp(b(t - \epsilon_i(t-1)))\\
&\cdot g(\mathbf{z}_i(t), \hat{\mathbf{z}}_i(t-1)).
\end{aligned}
\end{equation}

Based on the analysis above, we can observe that to realize frame transmission, two conditions should be met: i) The AoIS between a frame at BS and that at a user does not exceed the predefined threshold; ii) It also should satisfy the cost constraints on system resources imposed by $\alpha_i(t)$.

\vspace{-0.5cm}
\subsection{Problem Formulation}
For every sampling and transmission action, we consider a cost of $c$. This cost can represent, for instance, the power consumption for both sampling and transmission procedures. For transmitting video frame $t$, the system incurs two costs: The first one is the average transmission decision cost $\bar c_i$ defined as 
\begin{equation}
\bar{c}_i\triangleq\lim_{T\to\infty}\frac1 {T}\sum_{t=0}^{T}\mathbb{E}_{\alpha_i(t)}\left\{\alpha_i(t)c_i(t)\right\},
\label{cost}
\end{equation}
where ${T}$ is the total number of time slots, and $c_i(t)$ is the cost at the $t$-th time slot. The cost $c_i(t)$ is modeled as a constant representing the average power consumption per transmission and sampling action, enabling us to constrain the transmission frequency to balance system performance and energy efficiency.
The second one is the actuation cost, which is mainly determined by transmission delay and the semantic similarity. The expected actuation cost over time is defined as 
\begin{equation}
\bar{C}\triangleq\lim_{{T}\to\infty}\frac1{T}\sum_{t=0}^{T}\sum\limits_{i=1}^{U}\mathbb{E}_{\alpha_i(t)}\left\{\Delta_\text{AoIS} (\mathbf z_i(t), \hat{\mathbf z}_i(t), t)\right\}.
\end{equation}

In this paper, we aim to minimize the time-averaged AoIS of all the users by jointly optimizing the semantic actuation indicator  $\alpha_i(t)$, transceiver beamformers $\mathbf{V}_{i}(t), \mathbf U_i(t)$, and the semantic symbol length $L_i(t)$, subject to the system transmission decision cost, the transmit power constraints, the allocated number of semantic symbols per frame, and the maximum time delay tolerance. The AoIS-aware resource allocation problem is formulated as
\begin{subequations}
\label{p: initial}
\begin{align}
\min\limits_{\alpha_i(t), L_i(t), \mathbf{V}_{i}(t), \mathbf U_i(t)}
&\bar{C} \\\mathrm{s.t.}\,\,\,\,\,\,\,\,\,
&\bar{c}_i \leq c_{\mathrm {max}},\forall i \label{a}\\
&\sum\limits_{i=1}^{U}\text{Tr}\left(\mathbf{V}_{i}(t)\mathbf{V}_{i}^H(t)\right)  \leq P_\mathrm{max}, \label{d}\\
& L_i(t) \in \mathcal L,\forall i, \label{e}\\
& T_{i}(t) \leq T_\mathrm{max},\forall i, \label{g}
\end{align}
\end{subequations}
where $c_{\mathrm {max}}$ in \eqref{a} is a maximum actuation cost. \eqref{d} is the transmission power constraint at the BS. \eqref{e} $\mathcal L$ specifies all possible semantic symbol length per frame, and \eqref{g} restricts the maximum time delay tolerance of users by $T_\mathrm{max}$.

\emph{Remark:} The computational cost is usually proportional to the size of the input resolution, i.e., $N = 3HW$, such as CNN  encoder and decoder \cite{He}. The required number of CPU cycles per pixel using the encoder is determined by the architecture of the encoder and decoder networks. Once given the allocated computing resource to user $i$, the computational latency of the encoding and decoding processes is a fixed value. Therefore, we mainly consider the transmission delay as an important latency metric of wireless communication.

\section{Semantic-aware Resource Allocation}
\label{sec:perfect}
In this section, we first minimize the average total cost of semantic actuation error under average resource constraints. Then we resort to the AO algorithm to solve Problem \eqref{p: initial}. The AO algorithm decomposes the optimization variables into multiple blocks and optimizes different blocks in each iteration, with the other blocks fixed.

\subsection{Lyapunov Optimization}
Our objective is to devise a low-complexity algorithm to ensure that the average cost constraints are met while approximating the optimal solution. By leveraging Lyapunov optimization techniques, we introduce a real-time algorithm termed Drift-Plus-Penalty (DPP).
In order to satisfy the average cost constraints, we map the average cost constraint in \eqref{a} into a virtual queue \cite{DPP}. This allows us to equivalently transform the cost constraint satisfaction problem into a queue stability problem as detailed below.

We first define a virtual queue associated with the constraint in \eqref{a}, where a stable virtual queue implies a constraint-satisfying result. Let ${Q_i(t)}$ be the virtual queue at every video frame $t$ with dynamic update equation \cite[Eq. (11)] {Pappas_goal}
\begin{equation}
Q_i(t+1)=\mathrm{max}[Q_i(t)-c_\mathrm{max},0]+\alpha_i(t)c_i(t),
\label{Q_{t+1}}
\end{equation}
where the non-negative real valued random variable $Q_i(0)=0$. Based on the fundamental Lyapunov drift theorem \cite{Lyapunov}, process $\{Q_i(t)\}$ can be viewed as a virtual queue with arrivals $\alpha_i(t)$ and service rate $c_i(t)$. $\alpha_i(t)c_i(t)$ represents the amount of new work that arrives on slot $t$, and it is assumed to be non-negative.
To that end, we provide the following proposition.

\begin{Proposition}
\label{Pro}
If the virtual queue $Q_i(t)$ is mean rate stable, i.e.,
\begin{equation}
\lim_{t\to\infty}\frac{Q_i(t)} t=0,
\end{equation}
then the constraint $\bar{c}_i \leq c_{\mathrm {max}}$ in \eqref{a} is satisfied.
\end{Proposition}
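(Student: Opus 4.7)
The plan is to establish Proposition 1 by unrolling the virtual queue recursion, telescoping, and then invoking the mean rate stability hypothesis to extract the desired time-average constraint. The starting point is the update rule
\begin{equation}
Q_i(t+1)=\max[Q_i(t)-c_{\max},0]+\alpha_i(t)c_i(t).
\end{equation}
Since $\max[x,0]\ge x$ for any real $x$, I would first drop the maximum to obtain the one-step inequality
\begin{equation}
Q_i(t+1)\ge Q_i(t)-c_{\max}+\alpha_i(t)c_i(t).
\end{equation}

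Next I would iterate this bound from slot $0$ to slot $T-1$, so that the $Q_i(t)$ terms telescope and yield
\begin{equation}
Q_i(T)-Q_i(0)\ge \sum_{t=0}^{T-1}\alpha_i(t)c_i(t)-T\,c_{\max}.
\end{equation}
Using the initial condition $Q_i(0)=0$ and dividing both sides by $T>0$, this rearranges to
\begin{equation}
\frac{1}{T}\sum_{t=0}^{T-1}\alpha_i(t)c_i(t)\le c_{\max}+\frac{Q_i(T)}{T}.
\end{equation}

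I would then take expectations on both sides (the inequality is preserved since it holds pathwise), matching the definition of $\bar c_i$ in \eqref{cost}, and pass to the limit $T\to\infty$. On the right hand side, the mean rate stability hypothesis $\lim_{t\to\infty}Q_i(t)/t=0$ forces the residual term to vanish, leaving exactly $\bar c_i\le c_{\max}$, which is the claim.

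The argument is mostly mechanical once the correct one-sided inequality is extracted from the $\max$ operator; the only subtle point I would be careful about is the interchange of limit and expectation when moving from the pathwise inequality to the time-averaged expected arrival rate. If the statement of mean rate stability in the paper is interpreted in expectation, i.e., $\lim_{t\to\infty}\mathbb{E}[Q_i(t)]/t=0$, then taking expectation first and applying Fatou's lemma (or simply linearity, since the bound is deterministic once expectations are taken) handles this cleanly; I would explicitly note this step rather than skip over it, since it is the only place where the proof is not pure algebra.
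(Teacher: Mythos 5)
Your proposal is correct and follows essentially the same route as the paper's proof: drop the $\max$ operator via $\max[x,0]\ge x$, telescope the queue recursion, divide by $t$, and invoke mean rate stability to kill the residual term. You are in fact slightly more careful than the paper, which glosses over the expectation step needed to match the definition of $\bar c_i$ in \eqref{cost}; your remark about interpreting stability in expectation and justifying the limit--expectation interchange addresses a gap the paper leaves implicit.
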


\begin{proof}
See Appendix \ref{Appendix_Pro}.
\end{proof}

Based on Proposition \ref{Pro}, we aim to stabilize the virtual queue $Q_i(t)$. Denote $\mathbf Q(t) = (Q_1(t), Q_2(t), \cdots, Q_U(t))$ as the vector of all device users' virtual queues to measure the queue backlogs.
Let ${\boldsymbol{\alpha}(t)} = ({\alpha_1}(t), {\alpha_2}(t), \cdots, {\alpha_U}(t))$ represent all users' semantic actuation decision at time slot $t$.
We adopt the quadratic Lyapunov function $\Gamma(\mathbf Q(t)) = \frac{1}{2}\sum_{i=1}^U (Q_i(t))^2$ as a scalar measure of queue congestion. The constant coefficient $1/2$ can facilitate the deduction and algorithm design in the following part. A small value of $\Gamma(\mathbf Q(t))$ indicates less congestion and better satisfaction. Then, the one-slot Lyapunov drift is introduced to describe the small queue variation between two successive video frame slots, defined as
\begin{equation}
\Delta(\mathbf Q(t))\triangleq\mathbb{E}_{\boldsymbol{\alpha}(t)}\big\{\Gamma(\mathbf Q(t+1))-\Gamma(\mathbf Q(t))\big|\mathbf Q(t)\big\}.
\end{equation}
To stabilize the virtual AoIS queue $\mathbf Q(t)$, we aim to minimize the increment of the queue size, i.e., the Lyapunov drift $\Delta(\mathbf Q(t))$.
Given the inequality $(\mathrm{max}[W - b, 0] + a)^2 \leq W^2 + a^2 + b^2 + 2W (a-b)$ for non-negative $a$ and $b$, we have $\Delta(\mathbf Q(t))$ for a general policy satisfies
\begin{equation}
\begin{aligned}
\Delta & (\mathbf Q(t))
= \mathbb{E}_{\boldsymbol{\alpha}(t)}\big\{\Gamma(\mathbf Q(t+1))-\Gamma(\mathbf Q(t))\big|\mathbf Q(t)\big\} \\
&= \frac{1}{2}\sum_{i=1}^U \mathbb{E}_{{\alpha_i}(t)} \bigg\{ \bigg(\mathrm{max}[Q_i(t)-c_\mathrm{max},0]+\alpha_i(t)c_i(t)\bigg)^2  \\
& \qquad \qquad \qquad - (Q_i(t))^2 \big| Q_i(t) \bigg\}
\\
&\leq B+\sum_{i=1}^U\mathbb{E}_{{\alpha_i}(t)}\left\{Q_i(t)(\alpha_i(t)c_i(t)-c_\mathrm{max})\big|Q_i(t)\right\},
\end{aligned}
\end{equation}
where the constant $B = \sum_{i=1}^U \mathbb{E}_{{\alpha_i}(t)} \{((\alpha_i(t)c_i(t))^2 + c_\mathrm{max}^2)/2\}$.
We apply the DPP algorithm to minimize the average expected cost to keep information fresh while stabilizing the virtual queues $\mathbf Q(t)$. Specifically, this problem is reduced to minimizing the upper bound of the following expression
\begin{equation}
\begin{aligned}
&\Delta(\mathbf Q(t)) + \omega \sum\limits_{i=1}^{U}\mathbb{E}_{{\alpha_i}(t)}\left\{\Delta_\text{AoIS} (\mathbf z_i(t), \hat{\mathbf z}_i(t), t)\right\}\\
&\leq B + \sum_{i=1}^U\mathbb{E}_{{\alpha_i}(t)}\left\{Q_i(t)(\alpha_i(t)c_i(t)-c_\mathrm{max})\big|Q_i(t)\right\}\\
& + \omega  \sum\limits_{i=1}^{U}\mathbb{E}_{\mathbf n_i(t),{{\alpha_i}(t)}}\left\{\Delta_\text{AoIS} (\mathbf z_i(t), \hat{\mathbf z}_i(t), t)\right\} \triangleq \tilde C,
\end{aligned}
\label{upper_bound}
\end{equation}
where $\omega > 0$ is a weight.
To this end, we aim to minimize the right-hand side of the inequality \eqref{upper_bound} and problem \eqref{p: initial} is transformed into a per-slot optimization problem based on the Lyapunov optimization framework. Based on the preceding analysis, the optimization problem can be reformulated for each video frame and streamlined as 
\begin{subequations}
\label{p:v1}
\begin{align}
\min\limits_{\{\alpha_i(t),  L_i(t), \mathbf{V}_{i}(t), \mathbf U_i(t)\}}
& \tilde{C} \\\mathrm{s.t.}\,\,\,\,\,\,\,\,\,
&\sum\limits_{i=1}^{U}\text{Tr}\left(\mathbf{V}_{i}(t)\mathbf{V}_{i}^H(t)\right)  \leq P_\mathrm{max}, \label{d1}\\
& L_i(t) \in \mathcal{L},\forall i, \label{e1}\\
& T_{i}(t) \leq T_\mathrm{max},\forall i. \label{g1}
\end{align}
\end{subequations}

\emph{Remark:} The DPP algorithm has low complexity and is scalable for large systems, while it is suboptimal due to the ignorance of both long-term performance and the requirement of prior knowledge of channel/source statistics to compute expected action costs.

\subsection{Semantic Actuation Policy Design}
According to the principle of AO, we first investigate the optimization of transmitting decision indicator $\alpha_i(t)$ for fixed transmit beamformer $\mathbf V_i(t)$, receive beamformer $\mathbf U_i(t)$, and the semantic symbol length $L_i(t)$. For every time slot $t$, the transmitting decision indicator design
problem is accordingly reduced to
\begin{equation}
\label{p:a}
\begin{aligned}
&\min\limits_{\{{\alpha_i}(t)\}}
\,\,\,\sum_{i=1}^U\mathbb{E}_{{\alpha_i}(t)}\left\{Q_i(t)(\alpha_i(t)c_i(t)-c_\mathrm{max})\big|Q_i(t)\right\} \\
& + \omega \sum\limits_{i=1}^{U}\mathbb{E}_{{\alpha_i}(t)}\left\{\exp(b(t - \epsilon_i(t))) g(\mathbf z_i(t), \hat{\mathbf z}_i(t))\right\}.
\end{aligned}
\end{equation}
We adopt an exhaustive search method to solve the above problem for the optimal $\alpha_i(t)$. Since each $\alpha_i(t)$ is binary, there are a total of $2^U$ candidate combinations for $\boldsymbol{\alpha}(t) = (\alpha_1(t), \ldots, \alpha_U(t))$. Specifically, for each time slot $t$, the exhaustive search method systematically evaluates all possible combinations of $\alpha_i(t)$ values, computes the corresponding objective function value for each combination, and selects the set of $\alpha_i(t)$ values that minimizes the objective function.
This method guarantees a thorough investigation of every viable solution, ensuring a satisfactory performance while preserving a manageable computational load.

\subsection{Transceiver Beamformer Design}
According to the principle of AO, we optimize the beamformer $\mathbf{V}_{i}(t)$ for fixed $\alpha_i(t)$, ${L}_i(t)$, and $\mathbf U_i(t)$. The beamformer design problem is accordingly reduced to 
\begin{subequations}
\label{p: transmit}
\begin{align}
\min\limits_{\{\mathbf{V}_{i}(t)\}}\quad
& J(\{\mathbf{V}_{i}(t)\})
\label{V1_a}\\
\mathrm{s.t.}\,\,\,\,
& \sum\limits_{i=1}^{U}\text{Tr}\left(\mathbf{V}_{i}(t)\mathbf{V}_{i}^H(t)\right)  \leq P_\mathrm{max},  \label{V1_b}\\
&R_{i}(t) \geq \frac{{L}_i(t)}{BT_\mathrm{max}}, \forall i.  \label{V1_c}
\end{align}
\label{V}
\end{subequations}
where $J(\{\mathbf{V}_{i}(t)\})\triangleq \sum\limits_{i=1}^{U}\exp(b(t - \epsilon_i(t))) g(\mathbf z_i(t), \hat{\mathbf z}_i(t))$.

We first propose to leverage the SCA \cite{SCA} technique to update the transceiver beamforming matrix. The key to the success of SCA lies in constructing a surrogate function satisfying the gradient consistency for the objective function \eqref{V1_a} and the upper bound property for the constraints \eqref{V1_c}. Now, we introduce a lemma that facilitates the implementation of SCA.

\begin{lemma}
\label{lemma: log}
The logarithmic form $\log\det(\mathbf{I}+\mathbf{X}^{H}\mathbf{Y}^{-1}\mathbf{X})$ can be lower bounded as \cite[Prop. 7]{prop7}:

\begin{equation}
\begin{aligned}
\log & \det(\mathbf{I}+\mathbf{X}^{H}\mathbf{Y}^{-1}\mathbf{X}) \geq -\operatorname{tr}\big((\mathbf{Y}_0+\mathbf{X}_0\mathbf{X}_0^{H})^{-1}{\mathbf{X}_0} \\ & \times(\mathbf{I}+\mathbf{X}_0^{H}\mathbf{Y}_0^{-1}\mathbf{X}_0)\mathbf{X}_0^H(\mathbf{Y}_0+\mathbf{X}_0\mathbf{X}_0^H)^{-1} (\mathbf{Y} + \mathbf{X} \mathbf{X}^H)\big) \\
& + 2\Re \left(\operatorname{tr}((\mathbf{I}+\mathbf{X}_0^H\mathbf{Y}_0^{-1}\mathbf{X}_0)\mathbf{X}_0^H(\mathbf{Y}_0+\mathbf{X}_0\mathbf{X}_0^H)^{-1} \mathbf{X})\right) \\
&+ \log\det(\mathbf{I}+\mathbf{X}_0^H\mathbf{Y}_0^{-1}{\mathbf{X}_0}) -\mathrm{tr}(\mathbf{X}_0^H\mathbf{Y}_0^{-1}{\mathbf{X}_0}),    
\end{aligned}
\end{equation}
and the equality is achieved at $\mathbf{X}=\mathbf{X}_0, \mathbf{Y}= \mathbf{Y}_0$.
\end{lemma}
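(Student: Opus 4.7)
The plan is to derive the bound from a weighted-MMSE (WMMSE) variational identity that rewrites $\log\det(\mathbf{I}+\mathbf{X}^{H}\mathbf{Y}^{-1}\mathbf{X})$ as the maximum of an auxiliary objective that is linear in $\mathbf{Y}$ and quadratic-concave in $\mathbf{X}$. Freezing the two auxiliary variables at their optimizers at $(\mathbf{X}_0,\mathbf{Y}_0)$ will yield a surrogate that is, by the envelope principle, automatically a global lower bound and tight at the expansion point.

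Concretely, I would chain three standard identities: the log-det duality $-\log\det(\mathbf{M}) = \max_{\mathbf{W}\succ\mathbf{0}}\{\log\det(\mathbf{W})-\operatorname{tr}(\mathbf{W}\mathbf{M})\}+n$, applied to $\mathbf{M}=(\mathbf{I}+\mathbf{X}^{H}\mathbf{Y}^{-1}\mathbf{X})^{-1}$; the matrix inversion lemma $\mathbf{M}=\mathbf{I}-\mathbf{X}^{H}(\mathbf{Y}+\mathbf{X}\mathbf{X}^{H})^{-1}\mathbf{X}$; and the classical MMSE representation $\mathbf{M}=\min_{\mathbf{U}}\{(\mathbf{I}-\mathbf{U}^{H}\mathbf{X})(\mathbf{I}-\mathbf{U}^{H}\mathbf{X})^{H}+\mathbf{U}^{H}\mathbf{Y}\mathbf{U}\}$, with minimizer $\mathbf{U}^{\star}=(\mathbf{Y}+\mathbf{X}\mathbf{X}^{H})^{-1}\mathbf{X}$. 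Combining them gives
\[
\log\det(\mathbf{I}+\mathbf{X}^{H}\mathbf{Y}^{-1}\mathbf{X}) = \max_{\mathbf{W}\succ\mathbf{0},\,\mathbf{U}}\bigl[\log\det(\mathbf{W})+n-\operatorname{tr}\bigl(\mathbf{W}\mathbf{E}(\mathbf{X},\mathbf{Y},\mathbf{U})\bigr)\bigr],
\]
with $\mathbf{E}(\mathbf{X},\mathbf{Y},\mathbf{U})=(\mathbf{I}-\mathbf{U}^{H}\mathbf{X})(\mathbf{I}-\mathbf{U}^{H}\mathbf{X})^{H}+\mathbf{U}^{H}\mathbf{Y}\mathbf{U}$. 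Substituting the expansion-point optimizers $\mathbf{U}_0=(\mathbf{Y}_0+\mathbf{X}_0\mathbf{X}_0^{H})^{-1}\mathbf{X}_0$ and $\mathbf{W}_0=\mathbf{I}+\mathbf{X}_0^{H}\mathbf{Y}_0^{-1}\mathbf{X}_0$ and dropping the outer maximum produces a valid lower bound that reduces to $\log\det(\mathbf{I}+\mathbf{X}_0^{H}\mathbf{Y}_0^{-1}\mathbf{X}_0)$ at $(\mathbf{X}_0,\mathbf{Y}_0)$, as required.

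The remaining work is to identify the resulting surrogate with the exact formula in the lemma. I would expand $\operatorname{tr}(\mathbf{W}_0\mathbf{E}(\mathbf{X},\mathbf{Y},\mathbf{U}_0))$ and use the cyclic property of the trace to collapse the $\mathbf{X}\mathbf{X}^{H}$ and $\mathbf{Y}$ terms into $\operatorname{tr}(\mathbf{U}_0\mathbf{W}_0\mathbf{U}_0^{H}(\mathbf{Y}+\mathbf{X}\mathbf{X}^{H}))$. A direct computation then gives $\mathbf{U}_0\mathbf{W}_0\mathbf{U}_0^{H}=(\mathbf{Y}_0+\mathbf{X}_0\mathbf{X}_0^{H})^{-1}\mathbf{X}_0(\mathbf{I}+\mathbf{X}_0^{H}\mathbf{Y}_0^{-1}\mathbf{X}_0)\mathbf{X}_0^{H}(\mathbf{Y}_0+\mathbf{X}_0\mathbf{X}_0^{H})^{-1}$, which is exactly the sandwich matrix in the first trace of the lemma. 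The two Hermitian-conjugate linear-in-$\mathbf{X}$ cross terms coalesce into the stated $2\Re\operatorname{tr}(\cdot)$ expression, and the leftover constants simplify via $\log\det(\mathbf{W}_0)+n-\operatorname{tr}(\mathbf{W}_0)=\log\det(\mathbf{I}+\mathbf{X}_0^{H}\mathbf{Y}_0^{-1}\mathbf{X}_0)-\operatorname{tr}(\mathbf{X}_0^{H}\mathbf{Y}_0^{-1}\mathbf{X}_0)$.

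I expect the main obstacle to be this matrix bookkeeping rather than the inequality itself, which is essentially automatic once the WMMSE identity is in place. In particular, care is required to keep Hermitian-conjugate pairs together so the two cross terms merge into a single $2\Re$ expression, and to verify the specific simplification of $\mathbf{U}_0\mathbf{W}_0\mathbf{U}_0^{H}$. A more mechanical alternative is to bypass the variational identity and check the bound directly by showing that the difference of the two sides vanishes with its gradient at $(\mathbf{X}_0,\mathbf{Y}_0)$ and is convex in $(\mathbf{X},\mathbf{Y})$, but that route is notably more tedious.
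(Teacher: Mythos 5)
The paper itself does not prove this lemma: it is imported verbatim from the cited reference (Prop.~7 of Zhang--Zhao), so there is no in-paper argument to match yours against. Your WMMSE/variational derivation is a correct, self-contained proof and is the standard way such minorizers are obtained. The three ingredients all check out: the concave conjugate $-\log\det(\mathbf M)=\max_{\mathbf W\succ\mathbf 0}\{\log\det \mathbf W-\operatorname{tr}(\mathbf W\mathbf M)\}+n$ with maximizer $\mathbf W=\mathbf M^{-1}$; the Woodbury identity $(\mathbf I+\mathbf X^H\mathbf Y^{-1}\mathbf X)^{-1}=\mathbf I-\mathbf X^H(\mathbf Y+\mathbf X\mathbf X^H)^{-1}\mathbf X$; and the MMSE completion of the square, which holds in the positive semidefinite order so that $\operatorname{tr}(\mathbf W\,\mathbf E(\mathbf U))$ is minimized at $\mathbf U^\star=(\mathbf Y+\mathbf X\mathbf X^H)^{-1}\mathbf X$ for every $\mathbf W\succeq\mathbf 0$. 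Freezing $(\mathbf U_0,\mathbf W_0)$ at the optimizers for $(\mathbf X_0,\mathbf Y_0)$ then gives a global lower bound tight at the expansion point, and the bookkeeping you outline is exactly right: $\mathbf U_0\mathbf W_0\mathbf U_0^H$ is the sandwich matrix in the first trace, the two cross terms merge into $2\Re\operatorname{tr}(\mathbf W_0\mathbf U_0^H\mathbf X)$ precisely because $\mathbf W_0$ is Hermitian, and $\log\det\mathbf W_0+n-\operatorname{tr}(\mathbf W_0)$ collapses to the stated constant. The only points worth making explicit in a written-up version are the standing assumptions $\mathbf Y,\mathbf Y_0\succ\mathbf 0$ (needed for the conjugate identity, for $\mathbf W_0\succ\mathbf 0$, and for $\mathbf U^H\mathbf Y\mathbf U\succeq\mathbf 0$ in the MMSE step) and the observation that $\mathbf U^\star$ does not depend on $\mathbf W$, so the joint maximization over $(\mathbf W,\mathbf U)$ is legitimate. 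Your proposed fallback (checking convexity of the gap directly) is unnecessary.
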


\begin{algorithm}[t]
\caption{AoIS-aware Resource Allocation Design.}
\label{Alg: SCA}
{\bf Initialization:} Channel matrices $\mathbf H_i(t), \forall t, \forall i \in U$; the power budget $P_\mathrm{max}$; the maximum transmission delay tolerance $T_\mathrm{max}$.
\begin{algorithmic}[1]
\FOR {$t=1,2, \cdots$}
\STATE \textbf{Step 1:} DPP policy for problem transformation.\\
\text{~~~} {1.1:} Set weight factor $\omega$, and initialize the virtual queue 
\text{~~~~~~~~}  $Q_0(t) = 0$.\\
\text{~~~} {1.2:} At the start of the transmission of the $t$-th frame, \text{~~~~~~~~} observe $Q_i(t)$.\\
\text{~~~} {1.3:} Calculate $\bar c_i$ using equation \eqref{cost}.\\
\text{~~~} {1.4:} Make a selection of action $\alpha_i^*(t)$ based on \eqref{p:a}.\\
\text{~~~} {1.5:} Apply action $\alpha_i^*(t)$ and update $Q(t)$ according \text{~~~~~~~~}  to \eqref{Q_{t+1}}.

\STATE \textbf{Step 2:} AoIS-aware SCA algorithm to design \text{~~~~~~~~~~~~~} transceiver beamforming.\\
\text{~~~}  {2.1:} Initialize $\mathbf{V}_{i}^{(0)}(t)$, $\mathbf{U}_{i}^{(0)}(t)$, $\gamma_{v}^{(0)}(t)$, $\gamma_{u}^{(0)}(t)$, and \text{~~~~~~~~} set $n=0$.\\

\text{~~~} {2.2:} Construct surrogate functions \\
\text{~~~~~~~~}  $\tilde{J}(\{\mathbf{V}_{i}(t), \mathbf{V}_{i}^{(n)}(t)\})$ and $\tilde{J}(\{\mathbf{U}_{i}(t), \mathbf{U}_{i}^{(n)}(t)\})$ \\
\text{~~~~~~~~}  using Lemma \ref{lemma: log}.\\

\text{~~~}  {2.3:} Solve the convex QP problem in \eqref{Problem: V} and \eqref{Problem: U}.\\

\text{~~~}  {2.4:}  Update $\mathbf{V}_{i}^{(n+1)}(t)$ and $\mathbf{U}_{i}^{(n+1)}(t)$ according to Eqs. \eqref{update: V} and \eqref{update: U}, and go to \textbf{Step 2}, 2.2. This process is repeated until convergence is achieved.

\STATE \textbf{Step 3:} Optimize Problem \eqref{p: L} with the exhaustive search algorithm.
\ENDFOR
\end{algorithmic}
\end{algorithm}

Based on Lemma \ref{lemma: log}, at iteration $n$, a majorizing function for the constraint is constructed at $\mathbf V_{i}(t) = \mathbf V_{i}^{(n)}(t)$ as follows:
\begin{equation}
\begin{aligned}
R_{i}(t) & \geq   \Re \left(\operatorname{tr}(\mathbf F_2(t) \mathbf{U}_i^H(t)\mathbf{H}_{i}(t)\mathbf{V}_{i}(t))\right)   \\
&\quad - \operatorname{tr}\Bigg(\mathbf F_1(t) \bigg(\sum_{m=1}^{U}\mathbf{U}_{i}^H(t)\mathbf{H}_{i}(t)\mathbf{V}_{m}(t) \\
& \qquad  \cdot \mathbf{V}_{m}^H(t)\mathbf{H}_{i}^H(t)\mathbf{U}_{i}(t)  +\sigma_{i}^2\mathbf{I}\bigg)\Bigg) + \text{const}   \\
& \triangleq \tilde{R}_{i} (\{\mathbf{V}_{i}(t), \mathbf{V}_{i}^{(n)}(t)\}),
\end{aligned}
\end{equation}
where
\begin{subequations}
\begin{align}
\mathbf F_1(t) &\triangleq \mathbf F_3^{-1}(t){\mathbf{X}_0(t)} \mathbf F_4(t)\mathbf{X}_0^H(t) \mathbf F_3^{-1}(t),\\
\mathbf F_2(t) &\triangleq 2\mathbf F_4(t)\mathbf{X}_0^H(t)\mathbf F_3^{-1}(t), \\
\mathbf F_3(t) &\triangleq \mathbf{Y}_0(t) +\mathbf{X}_0(t)\mathbf{X}_0^{H}(t),\\
\mathbf F_4(t) &\triangleq \mathbf{I}+\mathbf{X}_0^{H}(t)\mathbf{Y}_0^{-1}(t)\mathbf{X}_0(t),\\
\mathbf{X}_0(t) &\triangleq \mathbf{U}_i^H(t)\mathbf{H}_{i}(t)\mathbf{V}_{i}^{(n)}(t),\\
\mathbf{Y}_0(t) \triangleq &\Bigg(\sum\limits_{m\neq i}\mathbf{U}_{i}^H(t)\mathbf{H}_{i}(t)\mathbf{V}_{m}^{(n)}(t)\\& \qquad \qquad \cdot \mathbf{V}_{m}^{(n)H}(t)\mathbf{H}_{i}^H(t)\mathbf{U}_{i}(t)+\sigma_{i}^2\mathbf{I}\Bigg)^{-1}.
\end{align}
\end{subequations}

Therefore, the transmit beamforming matrix set $\{\mathbf V_{i}(t)\}$ is updated as 
\begin{subequations}
\label{Problem: V}
\begin{align}
\{\hat{\mathbf{V}}_{i}^{(n)}(t)\}
&\triangleq\arg\min\limits_{\{\mathbf{V}_{i}(t)\}}\quad
\tilde{J}(\{\mathbf{V}_{i}(t), \mathbf{V}_{i}^{(n)}(t)\})\\
\mathrm{s.t.}\,\,\,\,
& \sum\limits_{i=1}^{U}   \text{Tr}\left(\mathbf{V}_{i}(t)\mathbf{V}_{i}^H(t)\right)  \leq P_\mathrm{max},\\
& \tilde{R}_{i} (\{\mathbf{V}_{i}(t), \mathbf{V}_{i}^{(n)}(t)\}) \geq \frac{{L}_i(t)}{BT_\mathrm{max}},
\end{align}
\end{subequations}
where the surrogate function $\tilde{J}(\cdot)$ for the objective function is constructed at $\mathbf V_{i}(t) = \mathbf V_{i}^{(n)}(t)$ as follows:
\begin{equation}
\begin{aligned}
&\tilde{J}(\{\mathbf{V}_{i}(t), \mathbf V_{i}^{(n)}(t)\}) \\
&\triangleq  \sum\limits_{i=1}^{U} \exp(b(t - \epsilon_i(t))) \nabla g(\mathbf z_i(t), \hat{\mathbf z}_i^{(n)}(t)) \mathbf{V}_{i}(t),
\end{aligned}
\end{equation}
and from Eq.~\eqref{eq: recovered1}, $\nabla g(\mathbf z_i(t), \hat{\mathbf z}_i^{(n)}(t))$ denotes the gradient of $g(\mathbf z_i(t), \hat{\mathbf z}_i^{(n)}(t))$ in terms of $\mathbf{V}_{i}(t)$ evaluated at $\mathbf{V}_{i}(t) = \mathbf{V}_{i}^{(n)}(t)$, where $g(\mathbf z_i(t), \hat{\mathbf z}_i^{(n)}(t))$ is treated as a function of $\mathbf V_i(t)$. This gradient is obtained by using Python's automatic differentiation tools.
Given $\tilde{J}(\{\mathbf{V}_{i}(t), \mathbf V_{i}^{(n)}(t)\})$, $\mathbf{V}_{i}^{(n+1)}(t)$ is updated according to
\begin{equation}
\label{update: V}
\mathbf{V}_{i}^{(n+1)}(t) = \mathbf{V}_{i}^{(n)}(t) + \gamma_{v}^{(n)}(t) (\hat{\mathbf{V}}_{i}^{(n)}(t) - \mathbf{V}_{i}^{(n)}(t)),
\end{equation}
where $\gamma_{v}^{(n)}(t)$ is the iterative step size satisfying
\begin{equation}
\begin{aligned}
&\lim_{n\to\infty}\gamma_{v}^{(n)}(t)=0,\\
&\sum\limits_{n=0}^{\infty} \gamma_{v}^{(n)}(t) = \infty,\\
&\sum\limits_{n=0}^{\infty} (\gamma_{v}^{(n)}(t))^2 < \infty
\end{aligned}
\label{gamma_con}
\end{equation}
We can observe that the problem is a convex quadratic programming (QP) problem, which can be effectively solved by CVXPY.

For the optimization of the receive beamformer matrix set $\{\mathbf{U}_i(t)\}$, we adopt the SCA method and the optimization process is similar to $\{\mathbf{V}_{i}(t)\}$, which can be described as follows
\begin{subequations}
\label{Problem: U}
\begin{align}
\{\hat{\mathbf{U}}_{i}^{(n)}(t)\}
&\triangleq\arg\min\limits_{\{\mathbf{U}_{i}(t)\}}\quad
\tilde{J}(\{\mathbf{U}_{i}(t), \mathbf{U}_{i}^{(n)}(t)\})\\
\mathrm{s.t.}\,\,\,\,
& \tilde R_{i} (\{\mathbf{U}_{i}(t), \mathbf{U}_{i}^{(n)}(t)\}) \geq \frac{{L}_i(t)}{BT_\mathrm{max}},
\end{align}
\end{subequations}
where 
\begin{equation}
\begin{aligned}
&\tilde{J}(\{\mathbf{U}_{i}(t), \mathbf U_{i}^{(n)}(t)\}) \\&= \sum\limits_{i=1}^{U}\exp(b(t - \epsilon_i(t)))
\nabla \tilde g(\mathbf z_i(t),
\hat{\mathbf z}_i^{(n)}(t))\mathbf{U}_{i}(t),
\end{aligned}
\end{equation}
and, similarly, $\nabla \tilde g(\mathbf z_i(t), \hat{\mathbf z}_i^{(n)}(t))$ denotes the gradient of $g(\mathbf z_i(t), \hat{\mathbf z}_i^{(n)}(t))$ in terms of $\mathbf{U}_{i}(t)$ evaluated at $\mathbf{U}_{i}(t) = \mathbf{U}_{i}^{(n)}(t)$, where $g(\mathbf z_i(t), \hat{\mathbf z}_i^{(n)}(t))$ is treated as a function of $\mathbf U_i(t)$.
Given $\tilde{J}(\{\mathbf{U}_{i}(t), \mathbf U_{i}^{(n)}(t)\})$, $\mathbf{U}_{i}^{(n)}(t)$ is updated according to 
\begin{equation}
\label{update: U}
\mathbf{U}_{i}^{(n+1)}(t) = \mathbf{U}_{i}^{(n)}(t) + \gamma_{u}^{(n)}(t) (\hat{\mathbf{U}}_{i}^{(n)}(t) - \mathbf{U}_{i}^{(n)}(t)),
\end{equation}
where $\gamma_{u}^{(n)}(t)$ is the iterative step size satisfying the similar condition as \eqref{gamma_con}.

\subsection{Transmitted Semantic Symbols Design}
With the given user association, beamforming design, and detection scheme, Problem \eqref{p: initial} can be simplified as
\begin{subequations}
\label{p: L}
\begin{align}
\min\limits_{\{L_i(t)\}}\quad
&\sum\limits_{i=1}^{U}\exp(b(t - \epsilon_i(t))) g(\mathbf z_i(t),
\hat{\mathbf z}_i(t))\\
& L_i(t) \in \mathcal L,\forall i,\\
&R_{i}(t) \geq \frac{{L}_i(t)}{BT_\mathrm{max}},\forall i \label{p: L-c}.
\end{align}
\end{subequations}
We adopt an exhaustive search algorithm to solve the above problem. 
A convolution layer is adopted at the end of the JSCC encoder to adjust the semantic symbol length by varying the number of convolution filters.
Note that the complexity is acceptable due to the finite set of $\{L_i(t)\}$. This method guarantees a thorough investigation of every viable solution, ensuring satisfactory performance while preserving a manageable computational load.
For practical deployment, we first train a set of statistically optimal semantic encoder-decoder pair, and then perform real-time fine-tuning to realize problem optimization based on the CSI.
The above optimization procedure is summarized in Algorithm \ref{Alg: SCA}.

\section{Low-Complexity Design for a Special MISO System} 
\label{sec:LCZF}
In this section, we examine a specific scenario in which each user is equipped with only a single antenna, i.e., $N_r =1$. Under this condition, the CSI and the beamforming matrix are simplified to vector forms. Then, a low-complexity ZF algorithm is proposed to optimize the transceiver beamformer.

\subsection{Low-Complexity Algorithm Design}
Let $\mathbf h_i \in \mathbb C^{N_t \times 1}$ denote the CSI between the BS and user $i$ and $\mathbf v_i \in \mathbb C^{N_t \times 1}$ represent the transmit beamformer for BS to convey the semantic signal. Furthermore, the achievable transmission rate at user $i$ is reduced to
\begin{equation}
\label{eq: SNR}
r_{i}(t)\triangleq \log_2\left(1 + \frac{|\mathbf h_{i}^H(t) \mathbf v_{i}(t)|^2}{\sum\limits_{j\neq i} |\mathbf h_{i}^H(t) \mathbf v_{j}(t)|^2 + \sigma_i^2}\right).
\end{equation}
Hence, the transceiver beamforming problem is reduced to
\begin{subequations}
\label{p: MISO transmit}
\begin{align}
\min\limits_{\{\mathbf{v}_{i}(t), u_{i}(t)\}}\quad
& \Xi(\{\mathbf{v}_{i}(t), \mathbf{u}_{i}(t)\})\triangleq \notag
\\&\sum\limits_{i=1}^{U}\exp(b(t - \epsilon_i(t))) g(\mathbf z_i(t), \hat{\mathbf z}_i(t))
\label{V2_a}\\
\mathrm{s.t.}\,\,\,\,
& \sum\limits_{i=1}^{U}\|\mathbf v_{i}(t)\|^2\leq P_\mathrm{max},  \label{V2_b}\\
&r_{i}(t) \geq \frac{{L}_i(t)}{BT_\mathrm{max}},  \label{V2_c}
\end{align}
\end{subequations}
where $\mathbf{u}(t) \triangleq  \{u_1(t), u_2(t), \cdots, u_U(t)\}$ represents the receive beamformer.

\begin{algorithm}[t]
	\caption{Low-complexity ZF beamforming algorithm.}
	\label{Alg: ZF}
	{\bf Initialization:} Channel matrices $\mathbf h_i(t), \forall t, \forall i \in U$; the transmit power constraint $P_\mathrm{max}$; the maximum transmission delay tolerance $T_\mathrm{max}$.
	\begin{algorithmic}[1]
		\FOR {$t=1,2, \cdots$}
		\STATE \textbf{Step 1:} Utilize DPP algorithm for problem transformation and derive the reformulated problem \eqref{p:v1}.\\
		\STATE \textbf{Step 2:} The proposed AoIS-aware low-complexity ZF beamforming algorithm for solving Problem \eqref{p: transmit}.\\
		
		\text{~~~} {2.1:} Initialize optimization variable vectors $\mathbf{q}$ and $\mathbf{u}$\\
		\text{~~~} {2.2:} Transform Problem \eqref{p: transmit} into Problem \eqref{p  ZF beamforming 2}. \\
		\text{~~~} {2.3:} Optimize Problem \eqref{p  ZF beamforming 2} with SGD or Adam optimizer.
		
		\STATE \textbf{Step 3:} Optimize Problem \eqref{p:a} and get $\alpha_i^*(t)$.\\
		\STATE \textbf{Step 4:} Adopt the exhaustive search algorithm to solve Problem \eqref{p: L} to obtain $L_i^*(t)$.
		\ENDFOR
	\end{algorithmic}
\end{algorithm}

\vspace{-0.2cm}
To make the above problem more tractable, we assume that all involved channels are perfectly known at BS that employs ZF beamforming for transmission, which is known to be optimal in the high SINR regime.  Define $\mathbf V(t)\triangleq\left[\mathbf v_1(t),\mathbf v_2(t),\cdots,\mathbf v_U(t)\right]$. Then, the perfect interference suppression is achieved by setting the ZF beamforming matrix to 
\begin{equation}
\begin{aligned}
\mathbf V(t) = \mathbf H(t) (\mathbf H^H(t)\mathbf H(t))^{-1} \sqrt{\mathbf P(t)},
\end{aligned}
\end{equation} 
where $\mathbf H(t) \triangleq \left(\mathbf h_1(t), \mathbf h_2(t), \cdots, \mathbf h_{U}(t)\right)$. $\mathbf P(t) \triangleq \operatorname{diag}(\mathbf p(t))$, $\mathbf p(t) \triangleq \left(p_1(t), p_2(t), \cdots, p_{U}(t)\right)^T$, $p_i(t)$ denotes the transmission power allocated to the $i$-th user at the $t$-th frame. From the construction of $\mathbf V(t)$, we know $\mathbf H^H(t)\mathbf V(t) = \sqrt{\mathbf P(t)}$ is a diagonal matrix, which means
\begin{equation}
	\vspace{-0.1cm}
\mathbf h_i^H(t) \mathbf v_j(t)=\left\{
\begin{array}{cc}
\sqrt{p_i(t)}, &  i=j,\\
0,     & i\neq j.
\end{array}
\right.
\vspace{-0.1cm}
\end{equation}
Substituting this $\mathbf V(t)$ into the received signal, it can be expressed as follows:
\begin{equation}
\begin{aligned}
&\mathbf y_i(t) = \mathbf h_i^H(t)\sum_{i=1}^U \mathbf v_i(t)\mathbf s_i(t) + n_i(t)\\
&= \sqrt{p_i(t)} \mathbf x_i(t) + n_i(t).
\end{aligned}
\end{equation}
Then, the achievable rate of the user $i$ can be computed as
\begin{equation}
r_i(t) = \log_2 \left(1+ \frac{p_i(t)}{\sigma_i^2}\right),
\end{equation}
Therefore, the constraint \eqref{V2_c} is equivalent to
\begin{equation}
p_i(t) \geq 2^{L_{i}(t)/(BT_\mathrm{max})}-1.
\end{equation}
Besides, the transmit power is given by
\begin{equation}
\begin{aligned}
&P =  \sum\limits_{i=1}^{U}   \|\mathbf v_{i}(t)\|^2 \\
&=\operatorname{Tr}(\mathbf V(t) \mathbf V(t)^H)\\
&=\operatorname{Tr}\left(\mathbf H(t) (\mathbf H^H(t)\mathbf H(t))^{-1} \mathbf P(t) (\mathbf H^H(t)\mathbf H(t))^{-1} \mathbf H^H(t)\right)\\
&=\operatorname{Tr}\left((\mathbf H(t)\mathbf H^H(t))^{-1} \mathbf P(t)\right).
\end{aligned}
\end{equation}
Let $\mathbf h(t) \triangleq \operatorname{diag}\left((\mathbf H(t)\mathbf H^H(t))^{-1}\right)$. The constraint \eqref{V2_b} is equivalent to
\begin{equation}
\mathbf h(t)^T\mathbf p(t) \leq P_\mathrm{max}.
\end{equation}

Finally, the transceiver beamformer design problem \eqref{p: MISO transmit} is reduced to
\begin{subequations}
\label{ZF beamforming 1}
\begin{align}
\underset{\{\mathbf p(t), \mathbf u(t)\}}{\text{min}}  \quad & \Xi^\prime(\{\mathbf p(t), \mathbf u(t)\})\notag\\
&\triangleq \Xi(\mathbf H(t)(\mathbf H^H(t)\mathbf H(t))^{-1} \sqrt{\mathbf P(t)}, \mathbf u(t))\label{p ZF beamforming 1-a}\\
\text { s.t. } 
&\mathbf h^T(t)\mathbf p(t) \leq P_\mathrm{max},\label{p ZF beamforming 1-b}\\
& p_i(t) \geq 2^{L_{i}(t)/(BT_\mathrm{max})}-1, \forall i. \label{p ZF beamforming 1-c}
\end{align}
\end{subequations}


To solve this problem efficiently, we first derive the structure of $\mathbf p(t)$ in the following proposition.
\begin{Proposition}
\label{prop: ZF}
The constraints \eqref{p ZF beamforming 1-b}-\eqref{p ZF beamforming 1-c} are equivalent to
\begin{equation}
\label{eq: constraints equivalence}
\begin{aligned}
\mathbf p(t) &= \mathbf p_0(t) + \frac{\mathbf q(t)\odot \mathbf q(t)}{\mathbf h^T(t) (\mathbf q(t) \odot \mathbf q(t))}(P_\mathrm{max} - \mathbf h^T(t) \mathbf p_0(t)),
\end{aligned}
\end{equation}
where $\mathbf p_0(t) \triangleq (2^{L_{1}(t)/(BT_\mathrm{max})}-1,\cdots, 2^{L_{i}(t)/(BT_\mathrm{max})}-1, \cdots, 2^{L_{U}(t)/(BT_\mathrm{max})}-1)$. $\mathbf p(t)$ is determined by the introduced parameter $\mathbf q(t)$.
\end{Proposition}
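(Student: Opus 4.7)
The plan is to establish the equivalence by exhibiting an explicit bijection between feasible power vectors $\mathbf p(t)$ (satisfying both constraints with the total power used up) and free parameter vectors $\mathbf q(t)$. First I would argue that the power budget constraint \eqref{p ZF beamforming 1-b} may be assumed active at optimum: since $\Xi'$ decreases monotonically in each $p_i(t)$ (more power yields a higher SNR, hence higher semantic similarity and lower AoIS), any $\mathbf p(t)$ with $\mathbf h^T(t)\mathbf p(t) < P_\mathrm{max}$ can be improved by scaling up, so without loss of generality $\mathbf h^T(t)\mathbf p(t) = P_\mathrm{max}$. I would also note that the feasibility of the system requires $P_\mathrm{max} - \mathbf h^T(t)\mathbf p_0(t) \geq 0$, which guarantees the scalar factor in \eqref{eq: constraints equivalence} is non-negative.

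Next I would prove the forward direction: given an arbitrary $\mathbf q(t) \in \mathbb R^U$ (with $\mathbf q(t) \neq \mathbf 0$), define $\mathbf p(t)$ via \eqref{eq: constraints equivalence}. Since $\mathbf q(t) \odot \mathbf q(t)$ is entrywise non-negative, $\mathbf h^T(t)(\mathbf q(t)\odot \mathbf q(t)) > 0$ (the channel gains $h_i(t)$ are positive diagonal entries of $(\mathbf H(t)\mathbf H^H(t))^{-1}$), and $P_\mathrm{max} - \mathbf h^T(t)\mathbf p_0(t) \geq 0$, each component of the added term is non-negative, hence $p_i(t) \geq p_{0,i}(t) = 2^{L_i(t)/(BT_\mathrm{max})}-1$, verifying \eqref{p ZF beamforming 1-c}. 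Premultiplying \eqref{eq: constraints equivalence} by $\mathbf h^T(t)$ collapses the scalar factor and yields $\mathbf h^T(t)\mathbf p(t) = \mathbf h^T(t)\mathbf p_0(t) + (P_\mathrm{max} - \mathbf h^T(t)\mathbf p_0(t)) = P_\mathrm{max}$, verifying \eqref{p ZF beamforming 1-b}.

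For the reverse direction, given a feasible $\mathbf p(t)$ with active budget, I would set $q_i(t) = \sqrt{p_i(t) - p_{0,i}(t)}$, which is well-defined because $p_i(t) \geq p_{0,i}(t)$. Then $\mathbf q(t)\odot\mathbf q(t) = \mathbf p(t) - \mathbf p_0(t)$, and $\mathbf h^T(t)(\mathbf q(t)\odot\mathbf q(t)) = \mathbf h^T(t)\mathbf p(t) - \mathbf h^T(t)\mathbf p_0(t) = P_\mathrm{max} - \mathbf h^T(t)\mathbf p_0(t)$. Substituting back into the right-hand side of \eqref{eq: constraints equivalence} produces exactly $\mathbf p_0(t) + (\mathbf p(t) - \mathbf p_0(t)) = \mathbf p(t)$, completing the equivalence.

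The main subtlety, rather than any hard calculation, is justifying why the parametrization is lossless even though it enforces equality in \eqref{p ZF beamforming 1-b}; this is where the monotonicity argument in the first paragraph is essential, and it is also the place where one must be careful to state the implicit feasibility assumption $\mathbf h^T(t)\mathbf p_0(t) \leq P_\mathrm{max}$ and the non-degeneracy $\mathbf q(t) \neq \mathbf 0$. The advantage of the parametrization is then that the constrained problem over $\mathbf p(t)$ becomes an unconstrained problem over $\mathbf q(t) \in \mathbb R^U$, amenable to standard first-order solvers such as SGD or Adam as invoked in Algorithm \ref{Alg: ZF}.
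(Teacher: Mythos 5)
Your proposal is correct and follows essentially the same route as the paper's proof in Appendix B: both first argue that monotonicity of the objective in power lets one take the budget constraint \eqref{p ZF beamforming 1-b} with equality, then parametrize the slack in \eqref{p ZF beamforming 1-c} as $\mathbf q(t)\odot\mathbf q(t)$ and verify the two directions by the same substitution. Your version is marginally more careful than the paper's in flagging the non-degeneracy condition $\mathbf q(t)\neq\mathbf 0$, the positivity of the entries of $\mathbf h(t)$, and the implicit feasibility requirement $\mathbf h^T(t)\mathbf p_0(t)\leq P_\mathrm{max}$, but these are refinements of the same argument rather than a different approach.
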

\begin{proof}
Please see Appendix \ref{appendix prop: ZF}.
\end{proof}

Based on Prop.~\ref{prop: ZF}, the optimization problem \eqref{ZF beamforming 1} w.r.t. the variables $\{\mathbf p(t),\mathbf u(t)\}$ can be equivalently reformulated as an optimization problem w.r.t. the variables $\{\mathbf q(t),\mathbf u(t)\}$. Then the transceiver beamformer design problem is transformed into an unconstrained optimization problem as follows
\begin{equation}
\label{p ZF beamforming 2}
\begin{aligned}
\underset{\{\mathbf q(t), \mathbf u(t)\}}{\text{min}} & \quad \Xi^\prime(\mathbf p_0(t) + \frac{\mathbf q(t)\odot \mathbf q(t)}{\mathbf h^T(t) (\mathbf q(t) \odot \mathbf q(t))}\\
&\qquad  \cdot (P_\mathrm{max} - \mathbf h^T(t) \mathbf p_0(t)), \mathbf u(t)),
\end{aligned}
\end{equation}
which can be tackled by the classic SGD or Adam optimizer.

Next, based on the principle of AO, the transmitting decision factor can be optimized by the exhaustive search method with the other optimization variables fixed. Additionally, given the user association and ZF transceiver beamforming design, we can employ the exhaustive search algorithm to find the optimal values $L_i(t), i=1,2,\cdots, U$, within the finite set $\mathcal L$, subject to the transmission rate $R_i(t), i=1,2,\cdots, U$, constraint \eqref{p: L-c}.

\subsection{Convergence and Complexity Analysis}
Convergence analysis: The convergence of the objective values is guaranteed by the following three facts. First, the objective value of the initial problem is non-increasing over iterations since each subproblem can achieve the stationary point or optimal solution. Second, the optimal objective value of the problem is guaranteed to converge. Third, since each variable is bounded, thus there must exist a convergent subsequence. 

Complexity analysis: The complexity of updating $\mathbf{V}_{i}$ at each iteration of the subproblem is $\mathcal O(U((N_t L_i)^{4} + N_r^{3.5} + N_r^2  N_t  L_i))$. Similarly, the complexity of updating $\mathbf{U}_{i}$ is $\mathcal O(U((N_t L_i)^{4} + N_r^{3.5} + N_r^2  N_t  L_i))$. The complexity of updating $L_i$ depends on the number of exhaustions $\Gamma$. The complexity of the design of semantic symbols is $\mathcal O(\Gamma(LC_{in}k^2HWC_{out} + C_{in}L_i + UN_r^2N_tL_i + H^2W^2))$, where $L$, $C_{in}$, $C_{out}$,  and $k^2$ are the number of convolutional layers, input channels, output channels, and the size of kernels, respectively. 

\begin{figure*}[htpb]
	 \vspace{-0.35cm}  
	 \setlength{\abovecaptionskip}{0.2cm}   
	 \setlength{\belowcaptionskip}{-2.8cm}   
	\centering
	\subfloat[PSNR for the reconstructed video frames.]{
		\label{PSNR_P}
		\includegraphics[width=0.48\linewidth]{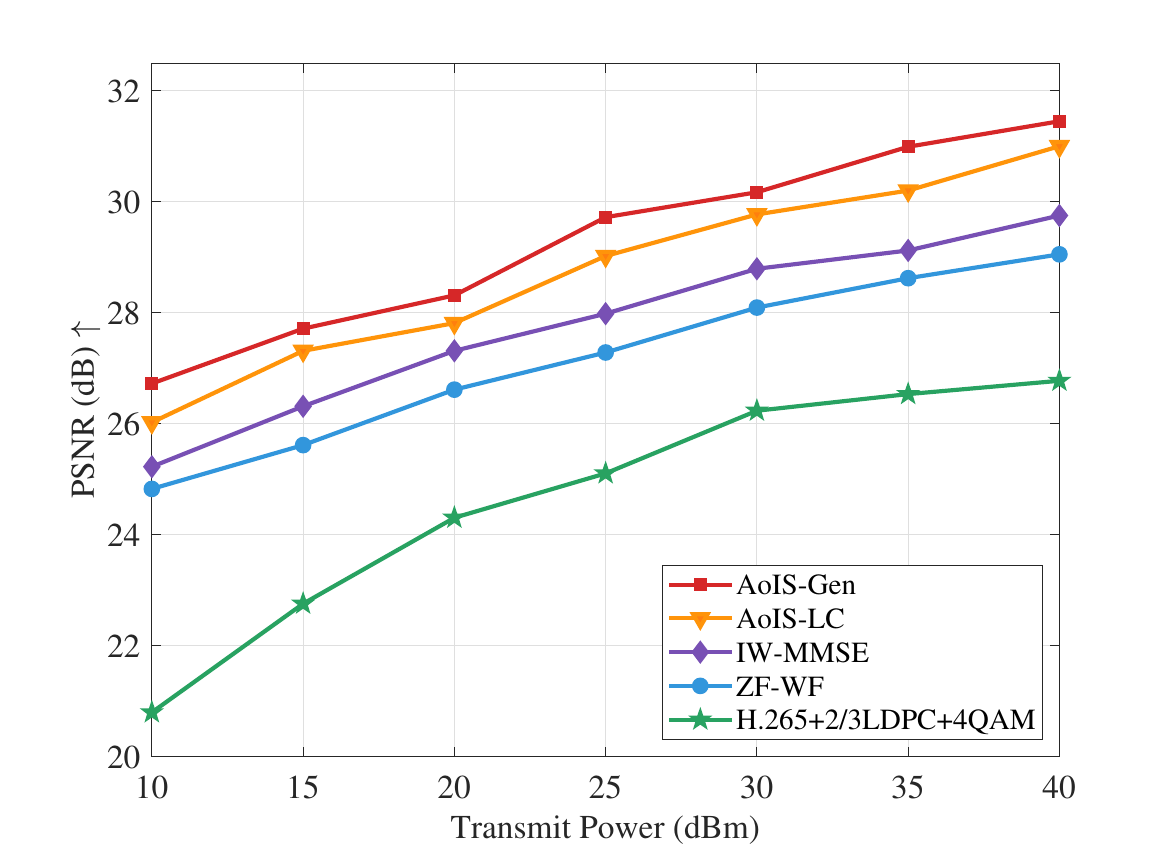}
	}\hfill
	\subfloat[MS-SSIM for the reconstructed video frames.]{
		\label{SSIM_P}
		\includegraphics[width=0.48\linewidth]{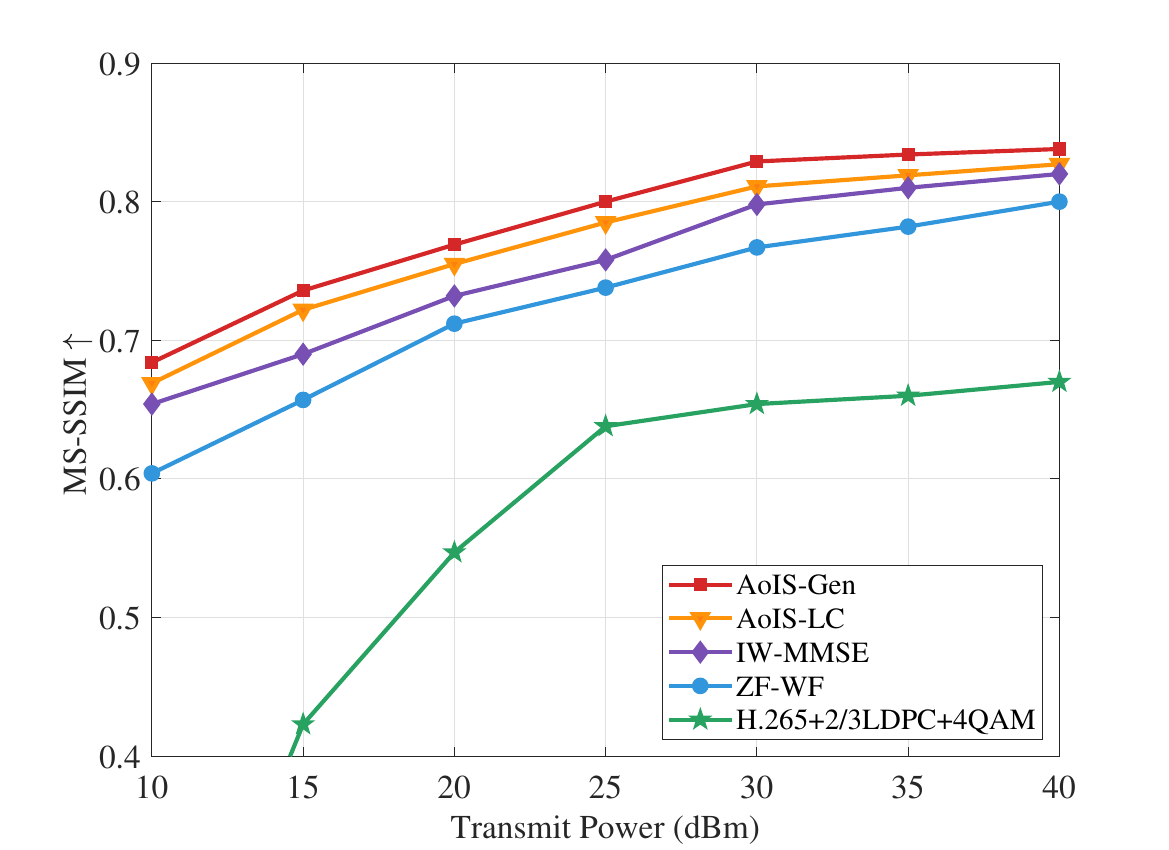}
	}
	\caption{Performance comparison of different solutions versus transmit power on the UCF101 dataset. (a) PSNR performance comparison; (b) MS-SSIM performance comparison.}
	\label{PSNR/SSIM}
\end{figure*}

\section{Numerical Results}
\label{sec:Results}
In this section, numerical results are provided to validate the effectiveness of the proposed semantic-aware resource allocation scheme.

\subsection{Experimental Setups}

\subsubsection{Datasets}
We compare the performance of the proposed algorithms with other benchmarks over the UCF101 \cite{UCF101} dataset, which  consists of 13,320 video clips, with a large variety of scenes and actions. The dataset is split into about a 4:1 ratio for training and testing, respectively. Data augmentation is adopted to increase sample diversity and improve model generalization ability and robustness, which encompass frame random horizontal flips, vertical flips, and random cropping to dimensions of $128 \times 128 \times 3$.

\subsubsection{Parameters and Training Details}
We utilize \textit{Deepwive} \cite{Deepwive} as our backbone network, and the total experiments are conducted on a server with eight RTX 4090 GPU cards. For the model pre-training, the optimizer is Adam \cite{Adam}, and we employ a dynamic learning rate schedule that progressively decreases from $1 \times 10^{-4}$ to $2 \times 10^{-5}$ along with different epochs. The batch size is configured to 1, and the total timeslot is set as 1000. In the video stream, each frame is transmitted by the sender in each time slot.
The number of users is $U=4$. The path loss model is 128.1 + 37.6 $\mathrm {\log [d(km)]}$dB with the standard deviation of shadow fading at 6 dB.  The power spectral density of Gaussian noise is equal to $\sigma^2$ = -174 dBm/Hz. The maximum tolerable delay and transmit power are set to $T_\mathrm{max}$ = 1 ms and $P_\mathrm{max}$ = 50 dBm, respectively. Total channel bandwidth $B$ is 5 MHz. 

\subsubsection{Evaluation Metrics} We leverage the pixel-wise metric peak signal-to-noise ratio (PSNR) and the perceptual-level multi-scale structural similarity (MS-SSIM) as measurements for the reconstructed frame quality. Higher PSNR/MS-SSIM values indicate better performance. To be more closely aligned with the purpose of semantic communications, we adopt the deep learned perceptual image patch similarity (LPIPS) metric \cite{LPIPS} as a perceptual loss measure. LPIPS mimics human perceptual assessment, with lower values indicating less distortion, with a maximum value of 1.

\subsubsection{Comparison Schemes}
We compare the proposed scheme with the following several
benchmarks: 

\begin{itemize}
\item {AoIS-Gen.} In this case, the wireless video transmission backbone is the {{JSCC-based}} network, and the transceiver beamformer is optimized by the general SCA method.

\item {AoIS-LC.} For this proposed scheme, the transceiver beamformer optimization problem is addressed using the proposed low-complexity ZF method. This approach simplifies the optimization process while maintaining effective performance in mitigating interference.

\item {IW-MMSE.} We adopt the scheme proposed in \cite{IW-MMSE}, where the achievable data rate of a system is characterized as a function of the weighted minimum mean square error (WMMSE) matrix of the recovered signals, optimizing the trade-off between rate and distortion.

\item {ZF-WF.} This benchmark incorporates the wireless video transmission framework \textit{Deepwive} which integrates ZF precoding with the water-filling (WF) power allocation strategy to achieve resource optimization.

\item {H.265+LDPC+QAM.} The separated source and channel coding scheme utilizes the H.265 \cite{H.265} video codec for source coding, low-density parity check (LDPC) code for channel coding, quadrature amplitude modulation (QAM) for modulation, and SVD for precoding. In this case, ZF and WF are adopted for channel equalization. 

\end{itemize}

\begin{figure*}
	\centering
	\subfloat{
		\includegraphics[width=0.95\linewidth]{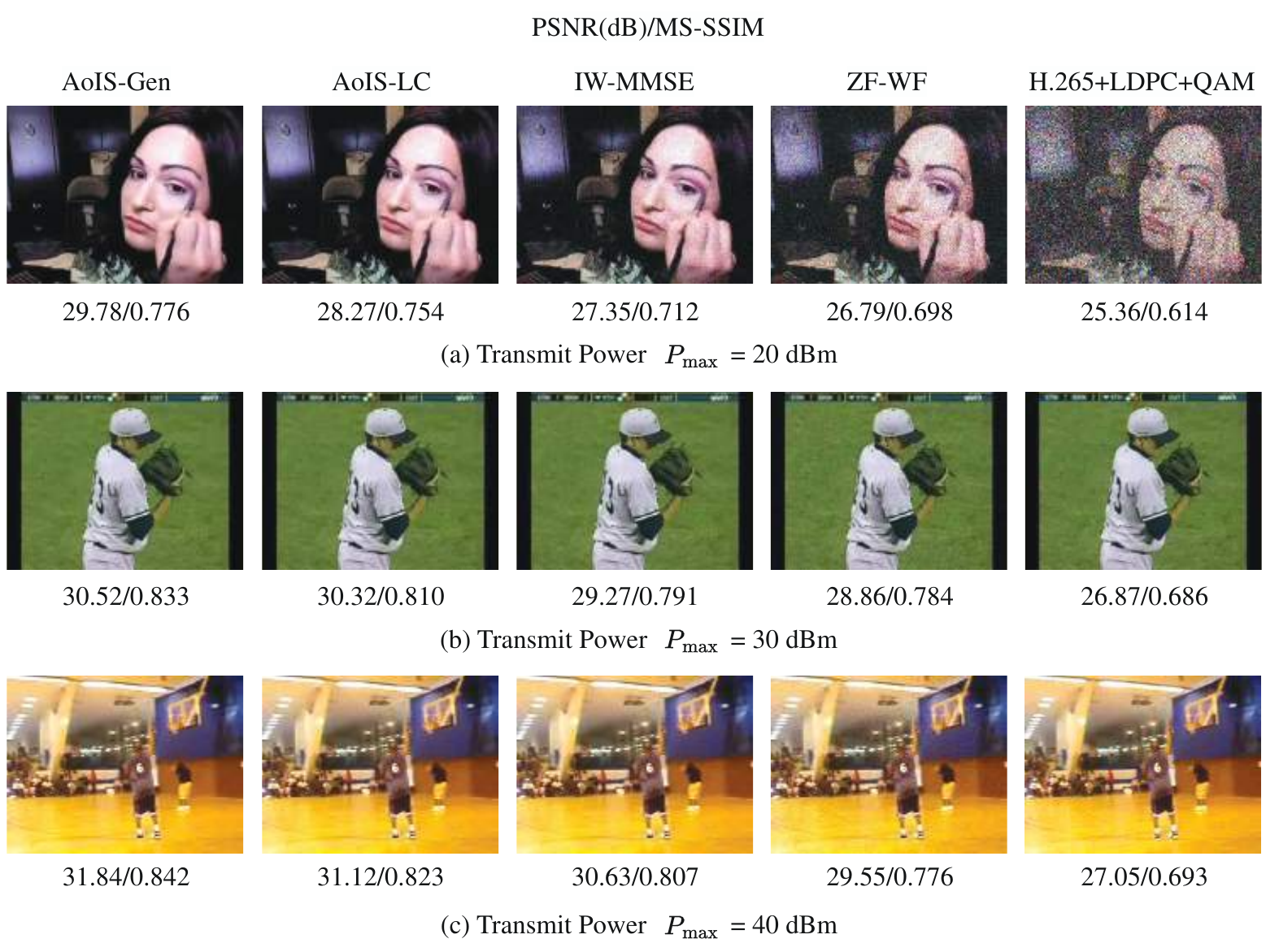}}
	\caption{Visual comparison of reconstructed video frames. The first, second, and third rows correspond to $P_\mathrm{max}$ = 20, 30 and 40 dBm, respectively. The first column shows the results of the AoIS-Gen scheme, and the second one presents the AoIS-LC results, while the third to fifth columns show the reconstructions of the IW-MMSE scheme, ZF-WF, and H.265+3/4LDPC+4QAM, respectively.}
	\label{Visualization}
\end{figure*}


\subsection{Result Analysis}
\subsubsection{Performance for Different Transmit Power}
We first investigate the reconstruction frame quality of our proposed AoIS-aware scheme along with that of the other baseline schemes in a 2 $\times$ 2 MIMO fading channel. To avoid ambiguity and ensure fair comparison, this MIMO channel is configured as two separate MISO channels. As shown in Fig. \ref{PSNR/SSIM}, it is clearly observed that the AoIS-Gen scheme outperforms all the other benchmarks. For the traditional separated coding approach, we adopt the `H.265+2/3LDPC+4QAM' as a benchmark, employing an LDPC code with a 2/3 rate and 4QAM modulation. It can be observed that the traditional approach experiences significant performance degradation under low transmit power conditions. In contrast, the JSCC-based scheme demonstrates superior robustness to low transmit power due to its integrated optimization of source and channel coding.
In terms of PSNR, as shown in Fig. \ref{PSNR/SSIM}\subref{PSNR_P}, the proposed AoIS-Gen and AoIS-LC schemes demonstrate a gain of about 4.2 dB and 4.6 dB over the traditional separate design approach at high transmit power levels (i.e., $P_\mathrm{max} \geq$ 35 dBm), respectively. Both of them outperform the traditional separate design approach by more than 5 dB at low transmit power levels, highlighting their robustness to challenging signal conditions. In addition, the proposed deep AoIS-Gen scheme consistently exhibits a 2 dB improvement in PSNR over all transmit power levels compared to the IW-MMSE scheme, which is mainly attributed to the effective optimization of the physical layer beamforming in our approach.
In terms of perceptual metrics, as shown in Fig. \ref{PSNR/SSIM}\subref{SSIM_P}, the proposed AoIS-aware schemes exhibit significant improvements over the conventional separate scheme. Conventional schemes are inferior to the DL-based counterparts in the low SNR regime because separate video compression is designed mainly for PSNR, and does not consider perceptual quality. The average MS-SSIM performance of the AoIS-Gen scheme surpasses that of the ZF-WF scheme by approximately 6.14\%. Furthermore, the perceptual metrics of the proposed two schemes are consistently better than the IW-MMSE and ZF-WF at all transmit power levels. These results confirm the effectiveness of the proposed AoIS-aware schemes in exploiting semantically aware video frame data and optimized beamforming for improved frame reconstruction performance.

\subsubsection{Visualization Performance for the Wireless Video Transmission}
To intuitively demonstrate the effectiveness of our proposed algorithm, we present the visualization results for the wireless video frame transmission in Fig. \ref{Visualization}. Here we utilize 3/4 rate LDPC and 4QAM for the conventional baseline. It is observed that for different transmit powers, the two proposed algorithms outperform the other baseline schemes in both evaluation metrics and visual quality since JSCC-based neural networks can retain more high-frequency video signal details. Specifically, in low transmit power, such as 20 dBm, IW-MMSE, ZF-WF, and the traditional schemes are unable to avoid blurry points and color noises for reconstructed video frames. Notably, at a high transmit power of 40 dBm, the baseline methods exhibit some distortions, whereas the proposed AoIS-aware scheme achieves better reconstruction with finer details, demonstrating its robustness across diverse power regimes. The visual performance of the frames reconstructed by the proposed AoIS-aware scheme has higher clarity and better detail representation compared to the baseline schemes. This further confirms the fact that the proposed schemes have significant advantages over the other baseline approaches. 

\begin{figure}[t]
	\centering
	\subfloat{
		\includegraphics[width=0.95\linewidth]{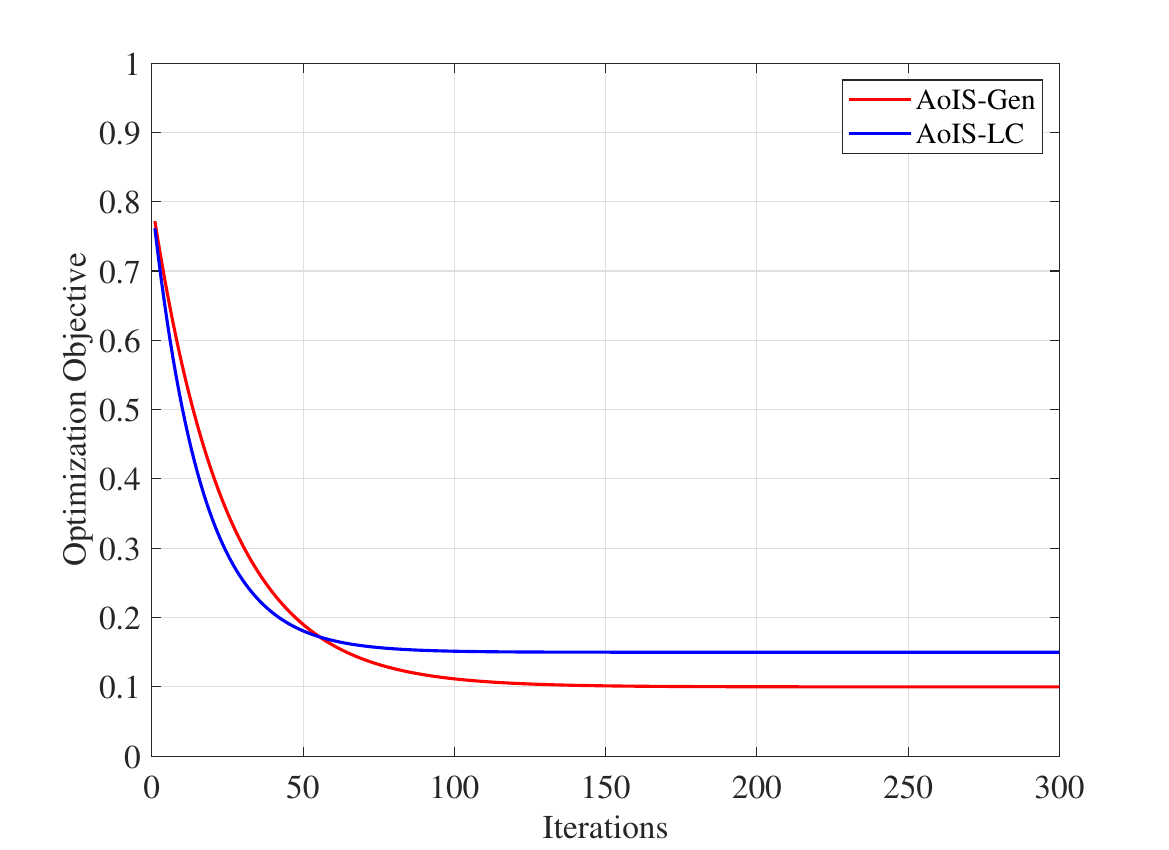}
	}
	\caption{Convergence performance of the proposed AoIS-Gen and AoIS-LC algorithms.}
	\label{Loss}
\end{figure}

\begin{figure}
	\subfloat{
		\includegraphics[width=1\linewidth]{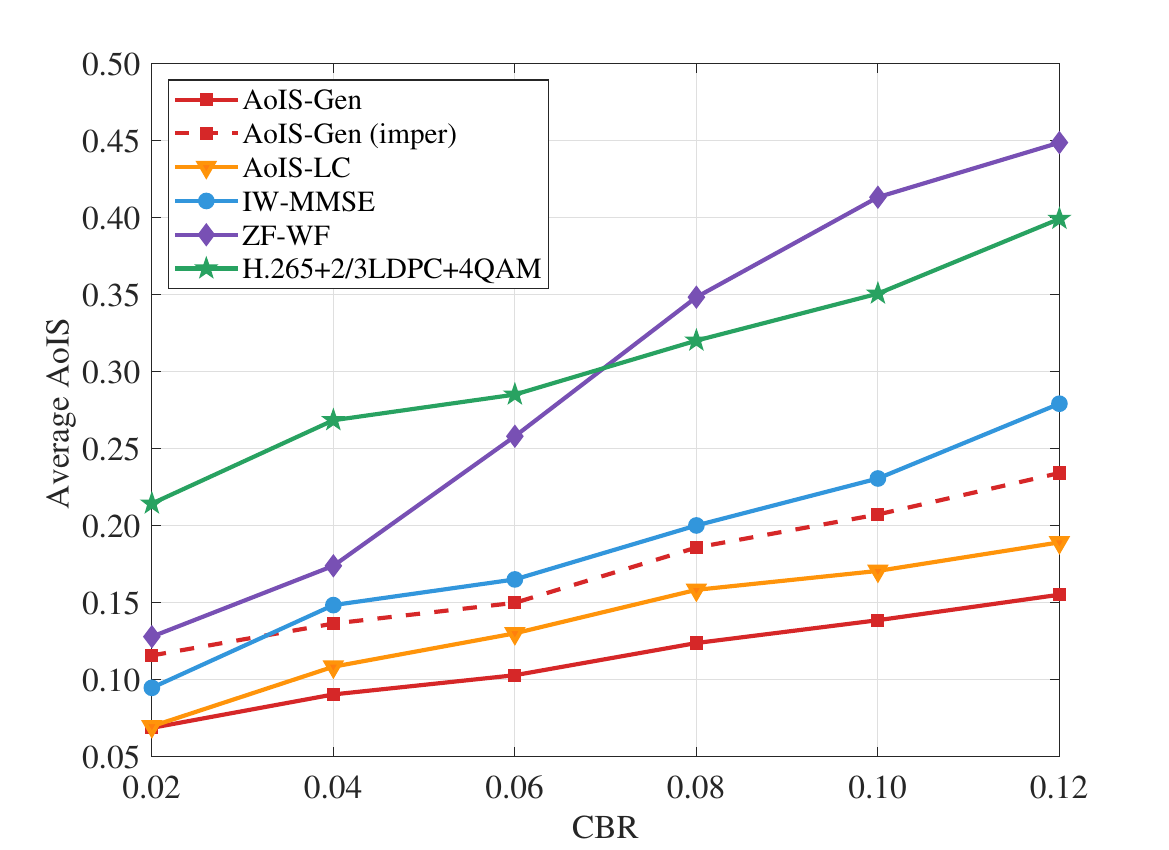}}
	\caption{Average AoIS versus different numbers of sources in a fading channel with $P_\mathrm{max}=35$ dBm.}
	\label{AoIS_CBR}
\end{figure}

\subsubsection{Convergence and Optimality Performance} Fig. \ref{Loss} demonstrates the convergence performance of both the proposed AoIS-Gen and low-complexity AoIS-LC algorithms for resource allocation. It is observed that the loss decreases with an increasing number of iterations and then converges to a stable value, indicating the effectiveness of the AoIS-aware method in finding viable scheduling and resource allocation strategies. Specifically, the AoIS-LC scheme achieves faster convergence compared to the AoIS-Gen scheme. The AoIS-Gen method has a lower average AoIS because the SCA optimization method can effectively optimize resource allocation, thus multiple iterations are required to achieve convergence. The proposed algorithm employs SCA to optimize the transceiver beamformers, achieving local optimality. Meanwhile, the semantic actuation indicator $\alpha_i(t)$ and the semantic symbol length $L_i(t)$ are optimized via exhaustive search, ensuring global optimality for these variables. In this way, the overall solution achieves suboptimal due to the joint optimization of multiple variables.

\begin{figure}[t]
	\centering
	\subfloat{
		\includegraphics[width=1\linewidth]{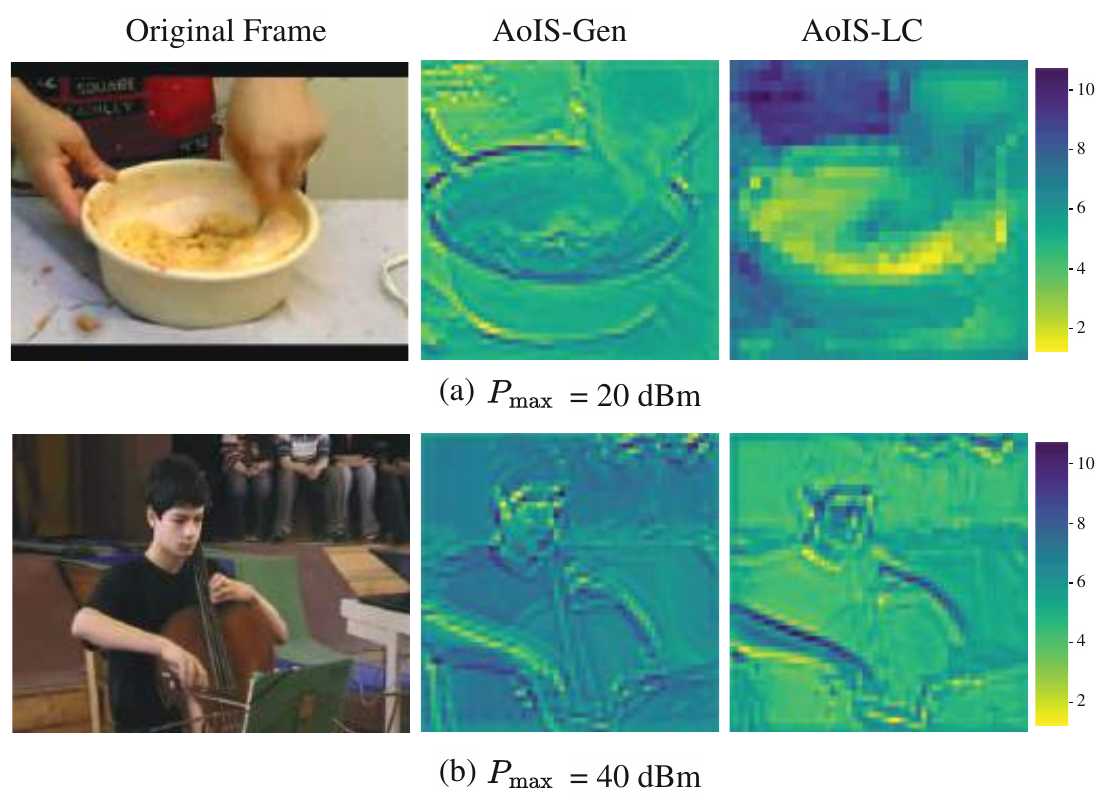}}
	\caption{Visualization of frame resource allocation map for feature vector with transmit power set to 20 dBm and 40 dBm.}
	\label{Visualization_feature}
\end{figure}

\subsubsection{Performance of Average AoIS}
Fig. \ref{AoIS_CBR} demonstrates the impact of different semantic data sizes on the average AoIS for several methods given the transmit power of 35 dBm. 
Moreover, to thoroughly assess the impact of channel estimation error on the system performance, we also conduct simulations under scenarios with imperfect CSI, labeled as ``AoIS-Gen (imper)'', and leverage the MMSE criterion for channel estimation.
We observe that the averaged AoIS performance of the JSCC-based schemes is superior to that of the traditional baseline. The traditional separate baseline scheme lacks a semantic codec, and in this case, the AoIS is reduced to AoI, as semantic information is not considered. 
Typically, a larger semantic symbol length $L_i$ value indicates more preserved semantic information for better reconstruction frame quality and more transmission load.
We observe that the average AoIS of all five methods increases with a larger amount of transmitted source information, since more transmission data causes more intense competition and lower actuation frequency. Among them, the proposed AoIS-aware schemes outperform the others and achieve the best AoIS results, proving the effectiveness.
Moreover, we can observe that the AoIS-Gen (imper) with imperfect channel estimation, slightly inferior to AoIS-Gen with perfect CSI, still notably outperforms the conventional communication methods and the DL-based semantic communication system. This further demonstrates the effectiveness of our system in handling imperfect channel estimation and maintaining superior performance.
For the same average AoIS, our proposed schemes can save more than 50\% semantic information. In the high CBR region, the traditional separate scheme shows better performance compared to the ZF-WF scheme. It can be attributed to the fact that, under constrained communication resources, the complex neural network structures can lead to substantial computational overhead when transmitting large volumes of semantic data.

\subsubsection{Visualization of Resource Allocation Map}
To further investigate the relationship between system communication resource and semantic symbol length, we present the resource allocation map in Fig. \ref{Visualization_feature}. It is observed that the pixels belong to the same object are assigned similar code lengths, whereas the boundaries and contours between distinct objects are allocated more resources for semantic features. Furthermore, as the transmit power increases, a greater number of code lengths are dedicated to capturing the fine details of objects, which enhances the fidelity of the reconstructed video frames.

\section{Conclusion}
\label{sec:Conclusion}
In this paper, we proposed a novel metric called AoIS and investigated the averaged AoIS minimization problem in a remote monitoring system for efficient wireless video transmission. The AoIS minimization problem was solved by the proposed Lyapunov-guided optimization method. Then, we adopted the AO algorithm to optimize all variables, including the semantic actuation indicator, {{transceiver}} beamformer, and semantic symbol length. Specifically, we proposed two methods to optimize the transceiver beamformer, i.e., the SCA method and a low-complexity ZF method. The length of transmitted symbols and the semantic actuation indicator were designed by the {{exhaustive}} searching algorithm. Numerical results demonstrated the validation of the AoIS metric and the effectiveness of the proposed SCA and ZF algorithms, indicating the huge potential of semantic communication in improving the {{efficiency}} of future wireless communication systems.

\appendices
\section{Proof of Proposition \ref{Pro}}
\label{Appendix_Pro}
For any discrete-time queueing system and for any $t > 0$ time slot $0 \leq \tau \leq {t-1}$, from \eqref{Q_{t+1}} we have  
\begin{equation}
\begin{aligned}
\frac{Q_i(t)} {t} - \frac{Q_i(0)} {t} \geq 
& \frac{1} {t} \sum\limits_{\tau=0}^{t-1}\alpha_i(\tau)c_i(\tau) - \frac{1} {t}\sum\limits_{\tau=0}^{t-1}c_\mathrm{max}
\\& =\frac{1} {t} \sum\limits_{\tau=0}^{t-1}\alpha_i(\tau)c_i(\tau) - c_\mathrm{max}.
\end{aligned}
\label{pro: stable}
\end{equation}
Next, taking the limitation of \eqref{pro: stable} yields
\begin{equation}
\lim\limits_{t\to\infty}\frac{Q_i(t)} {t} - \lim\limits_{t\to\infty}\frac{Q_i(0)} {t} \geq 
\lim\limits_{t\to\infty} \frac{1} {t} \sum\limits_{\tau=0}^{t-1}\alpha_i(\tau)c_i(\tau) - c_\mathrm{max}.
\label{pro: lim_stable}
\end{equation}
Since $Q_i(t)$ is stable i.e., $\lim\limits_{t\to\infty}\frac{Q_i(t)} {t}=0$, and $Q_i(0)$ is constant, we have
\begin{equation}
\frac{1}{t}\sum_{\tau=0}^{t-1} \alpha_i(\tau)c_i(\tau) - c_\mathrm{max} \leq 0,
\end{equation}
which implies the constraint $\bar c_i \leq c_\mathrm{max}$ in \eqref{a} is stisfied.

\section{Proof of Proposition \ref{prop: ZF}}
\label{appendix prop: ZF}
In this section, we mainly use the following definitions
\begin{equation}
\begin{aligned}
\mathbf p_0(t) &\triangleq (2^{L_{1}(t)/(BT_\mathrm{max})}-1, \cdots, 2^{L_{U}(t)/(BT_\mathrm{max})}-1)^T,\\
\mathbf p(t) &\triangleq \left(p_1(t), p_2(t), \cdots, p_{U}(t)\right)^T,
\end{aligned}
\end{equation}
to derive the equivalence between the constraints \eqref{p ZF beamforming 1-b}-\eqref{p ZF beamforming 1-c} and Eq. \eqref{eq: constraints equivalence}. Before we start the proof, we first present the following facts. Given that we employ ZF beamforming, all precoders are orthogonal, and increasing power can enhance system performance. Based on this premise, we understand that if $\mathbf{h}^T(t)\mathbf{p}(t) < P_{\text{max}}$, it is possible to identify a better precoder vector $\mathbf{p}(t)$ such that the inner product equals the maximum power, i.e., for every time slot $t$,
\begin{equation}
\label{p perfect: ZF beamforming 1-b equivalence}
\mathbf{h}^T(t)\mathbf{p}(t) = P_{\text{max}}.
\end{equation}
Hence, in the subsequent, we need to prove the equivalence between the two equations, namely~\eqref{p ZF beamforming 1-c} and~\eqref{p perfect: ZF beamforming 1-b equivalence}, and the equation~\eqref{eq: constraints equivalence}.

\textit{Necessity:} From Eq.~\eqref{p ZF beamforming 1-c}, we can introduce an auxiliary variable $\mathbf q(t)=(q_1(t), q_2(t),\cdots,q_U(t))^T$ such that
\begin{equation}
p_i(t) = 2^{L_{i}(t)/(BT_\mathrm{max})}-1 + q_i^2(t), i=1,2\cdots,U.
\end{equation}
Then, Eq.~\eqref{p ZF beamforming 1-c} is equivalent to
\begin{equation}
\begin{aligned}
\mathbf p(t) &= \mathbf p_{0}(t) + \mathbf q(t) \odot \mathbf q(t),
\end{aligned}
\end{equation}
which can be substituted into \eqref{p perfect: ZF beamforming 1-b equivalence} and we get
\begin{equation}
\begin{aligned}
&\mathbf{h}^T(t)(\mathbf p_{0}(t) + \mathbf q(t) \odot \mathbf q(t)) = P_{\mathrm{max}},\\
&\frac{P_{\mathrm{max}} - \mathbf{h}^T(t)\mathbf p_{0}(t)}{\mathbf{h}^T(t)(\mathbf q(t) \odot \mathbf q(t))}=1.
\end{aligned}
\end{equation}
Combining the above equations, we have
\begin{equation}
\begin{aligned}
\mathbf p(t) &= \mathbf p_0(t) + \frac{\mathbf q(t) \odot \mathbf q(t)}{\mathbf h^T(t) (\mathbf q(t) \odot \mathbf q(t))}(P_\mathrm{max} - \mathbf h^T(t) \mathbf p_0(t)).
\end{aligned}
\end{equation}

\textit{Sufficiency:} Since $P_\mathrm{max}\geq \mathbf h^T(t) \mathbf p_0(t)$, from \eqref{eq: constraints equivalence} we have
\begin{equation}
\begin{aligned}
&\mathbf h^T(t)\mathbf p(t) = \mathbf h^T(t) \bigg(\mathbf p_0(t) + \frac{\mathbf q(t) \odot \mathbf q(t)}{\mathbf h^T(t) (\mathbf q(t) \odot \mathbf q(t))} \\
&\qquad \qquad \qquad \qquad  \cdot (P_\mathrm{max} - \mathbf h^T(t) \mathbf p_0(t))\bigg) =P_\mathrm{max},
\end{aligned}
\end{equation}
and
\begin{equation}
\begin{aligned}
\mathbf p(t) 
& = \mathbf p_0(t) + \frac{\mathbf q(t) \odot \mathbf q(t)}{\mathbf h^T(t) (\mathbf q(t) \odot \mathbf q(t))}(P_\mathrm{max} - \mathbf h^T(t) \mathbf p_0(t)) \\
&\geq \mathbf p_0(t),
\end{aligned}
\end{equation}
From the definitions $\mathbf p_0(t)$ and $\mathbf p(t)$, we know that for each user $i$,
\begin{equation}
p_i(t) \geq 2^{L_{i}(t)/(BT_\mathrm{max})}-1,\forall i.
\end{equation}
Hence, the constraints \eqref{p ZF beamforming 1-b}-\eqref{p ZF beamforming 1-c} hold.

\end{document}